\def\ps@pprintTitle{%
 \let\@oddhead\@empty
 \let\@evenhead\@empty
 \def\@oddfoot{}%
 \let\@evenfoot\@oddfoot}
\newtheorem{exmp}{Example}
\newcommand{\repeattheorem}[1]{%
  \begingroup
  \renewcommand{\thetheorem}{\ref{#1}}%
  \expandafter\expandafter\expandafter\theorem
  \csname reptheorem@#1\endcsname
  \endtheorem
  \endgroup
}
\xdef\csname reptheorem@#1\endcsname{%
    \unexpanded\expandafter{\BODY}%
  }%
\unskip\label{#1}\endtheorem
\newcommand{\problemFont}[1]{\protect\ensuremath{\mathsf{#1}}\xspace}
\newcommand{\classFont}[1]{\protect\ensuremath{\mathsf{#1}}\xspace}
\newcommand{\coNP}{\classFont{co-NP}}
\newcommand{\sub}{\subseteq}
\newcommand{\tuple}[1]{\vec{#1}}
\newcommand{\Dom}{\textrm{Dom}}
\newcommand{\A}{\mathfrak{A}}
\newcommand{\N}{\mathbb{N}}
\newcommand\re[2]{{
  \left.\kern-\nulldelimiterspace 
  #1 
  \vphantom{\big|} 
  \right|_{#2} 
  }}
\newcommand{\calX}{\mathcal{X}}
\newcommand{\calY}{\mathcal{Y}}
\newcommand{\calZ}{\mathcal{Z}}
\newcommand{\calU}{\mathcal{U}}
\newcommand{\calV}{\mathcal{V}}
\newcommand{\calP}{\mathcal{P}}
\newcommand{\calK}{\mathcal{K}}
\newcommand{\threeSAT}{\problemFont{3}-\problemFont{SAT}}
\newcommand{\overbar}[1]{\mkern 1.4mu\overline{\mkern-1.4mu#1\mkern-1.4mu}\mkern 1.4mu}
\newtheorem{theorem}{Theorem}
\newtheorem{lemma}[theorem]{Lemma}
\newtheorem{corollary}[theorem]{Corollary}
\newtheorem{definition}[theorem]{Definition}
\newtheorem{example}[theorem]{Example}
\journal{JCSS}
\begin{document}

\begin{frontmatter}



\title{Controlling Entity Integrity with Key Sets\footnote{Some of our results were announced at the 23rd International Conference on Automated Reasoning (IJCAR 2018)}}


\author[Suomi]{Miika Hannula}
\ead{miika.hannula@helsinki.fi}
\author[Aotearoa]{Xinyi Li}
\ead{}
\author[Aotearoa]{Sebastian Link}
\ead{s.link@auckland.ac.nz}
\address[Suomi]{Department of Mathematics and Statistics, University of Helsinki, Helsinki, Finland}
\address[Aotearoa]{School of Computer Science, University of Auckland, New Zealand}

\begin{abstract}
Codd's rule of entity integrity stipulates that every table has a primary key. Hence, the attributes of the primary key carry unique and complete value combinations. In practice, data cannot always meet such requirements. Previous work proposed the superior notion of key sets for controlling entity integrity. We establish a linear-time algorithm for validating whether a given key set holds on a given data set, and demonstrate its efficiency on real-world data. We establish a binary axiomatization for the associated implication problem, and prove its coNP-completeness.  However, the implication of unary by arbitrary key sets has better properties. The fragment enjoys a unary axiomatization and is decidable in quadratic time. Hence, we can minimize overheads before validating key sets. While perfect models do not always exist in general, we show how to compute them for any instance of our fragment. This provides computational support towards the acquisition of key sets.
\end{abstract}

\end{frontmatter}

\section{Introduction}

Keys provide efficient access to data in database systems. They are required to understand the structure and semantics of data. For a given collection of entities, a key refers to a set of column names whose values uniquely identify an entity in the collection. For example, a key for a relational table is a set of columns such that no two different rows have matching values in each of the key columns. Keys are fundamental for most data models, including semantic models, object models, XML, RDF, and graphs. They advance many classical areas of data management such as data modeling, database design, and query optimization. Knowledge about keys empowers us to 1) uniquely reference entities across data repositories, 2) reduce data redundancy at schema design time to process updates efficiently at run time, 3) improve selectivity estimates in query processing, 4) feed new access paths to query optimizers that can speed up the evaluation of queries, 5) access data more efficiently via physical optimization such as data partitioning or the creation of indexes and views, and 6) gain new insight into application data. Modern applications create even more demand for keys. Here, keys facilitate data integration, help detect duplicates and anomalies, guide the repair of data, and return consistent answers to queries over dirty data. The discovery of keys from data sets is a core task of data profiling.

Due to the demand in real-life applications, data models have been extended to accommodate missing information. The industry standard for data management, SQL, allows occurrences of a null marker to model any kind of missing value. Occurrences of the null marker mean that no information is available about an actual value of that row on that attribute, not even whether the value exists and is unknown nor whether the value does not exist. Codd's principle of entity integrity suggests that every entity should be uniquely identifiable. In SQL, this has led to the notion of a primary key. A primary key is a collection of attributes which stipulates uniqueness and completeness. That is, no row of a relation must have an occurrence of the null marker on any columns of the primary key and the combination of values on the columns of the primary key must be unique. The requirement to have a primary key over every table in the database is often inconvenient in practice. Indeed, it can happen easily that a given relation does not exhibit any primary key. This is illustrated by the following example.

\begin{example}\label{ex:intro} Consider the following snapshot of data from an accident ward at a hospital \cite{DBLP:journals/dam/Thalheim92}. Here, we collect information about
the \emph{name} and \emph{address} of a patient, who was treated for an \emph{injury} in some \emph{room} at some \emph{time}.
\begin{center}
\begin{tabular}{c@{\hspace*{.5cm}}c@{\hspace*{.5cm}}c@{\hspace*{.5cm}}c@{\hspace*{.5cm}}c}\hline
\textit{room} & \textit{name} & \textit{address} & \textit{injury} & \textit{time} \\ \hline
1             & Miller        & $\perp$          & cardiac infarct & Sunday, 19 \\
$\perp$       & $\perp$       & $\perp$          & skull fracture  & Monday, 19 \\
2             & Maier         & Dresden          & leg fracture    & Sunday, 16 \\
1             & Miller        & Pirna            & leg fracture    & Sunday, 16 \\ \hline
\end{tabular}
\end{center}
Evidently, the snapshot does not satisfy any primary key since each column features some null marker occurrence, or a duplication of some value.
\end{example}

In response, several researchers proposed the notion of a key set. As the term suggests, a key set is a set of attribute subsets. Naturally, we call the elements of a key set a key. A relation satisfies a given key set if for every pair of distinct rows in the relation there is some key in the key set on which both rows have no null marker occurrences and non-matching values on some attribute of the key. The formal definition of a key set will be given in Definition~\ref{d:key-set} in Section~\ref{s:notation}. The flexibility of a key set over a primary key can easily be recognized, as a primary key would be equivalent to a singleton key set, with the only element being the primary key. Indeed, with a key set different pairs of rows in a relation may be distinguishable by different keys of the key set, while all pairs of rows in a relation can only be distinguishable by the same primary key. We illustrate the notion of a key set on our running example.

\begin{example}\label{ex:intro-2} The relation in Example~\ref{ex:intro} satisfies no primary key. Nevertheless, the relation satisfies several key sets. For example, the key set $\{\{\textit{room}\},\{\textit{time}\}\}$ is satisfied, but not the key set $\{\{\textit{room},\textit{time}\}\}$. The relation also satisfies the key sets \[\mathcal{X}_1=\{ \{\textit{room},\textit{time}\}, \{\textit{injury},\textit{time}\} \}\mbox{ and }\mathcal{X}_2=\{ \{\textit{name},\textit{time}\}, \{\textit{injury},\textit{time}\} \},\] as well as the key set $\mathcal{X}=\{ \{\textit{room},\textit{name},\textit{time}\}, \{\textit{injury},\textit{time}\} \}$.
\end{example}

It is important to point out a desirable feature that primary keys and key sets share. Both are independent of the interpretation of null marker occurrences. That is, any given primary key and any given key set is either satisfied or not, independently of what information any of the null marker occurrences represent. Primary keys and key sets are only dependent on actual values that occur in the relevant columns. This is achieved by stipulating the completeness criterion. The importance of this independence is particularly appealing in modern applications where data is integrated from various sources, and different interpretations may be associated with different occurrences of null markers.

Given the flexibility of key sets over primary keys, and given their independence of null marker interpretations, it seems natural to further investigate the notion of a key set. Somewhat surprisingly, however, neither the research community nor any system implementations have analyzed key sets since their original proposal in 1989. The main goal of this article is to take first steps into the investigation of computational problems associated with key sets. In database practice, one of the most fundamental problems is the implication problem. The problem is to decide whether for a given set $\Sigma\cup\{\varphi\}$ of key sets, every relation that satisfies all key sets in $\Sigma$ also satisfies $\varphi$. Reasoning about the implication of any form of database constraints is important because efficient solutions to the problem enable us to facilitate the processing of database queries and updates.

\begin{example} Recall the key sets $\mathcal{X}_1$, $\mathcal{X}_2$, and $\mathcal{X}$ from Example~\ref{ex:intro-2}. An instance of the implication problem is whether $\Sigma=\{\mathcal{X}_1,\mathcal{X}_2\}$ implies the key set $\varphi=\mathcal{X}$, and another instance is whether $\Sigma$ implies \[\varphi'=\{\{\textit{room}\},\{\textit{name}\}, \{\textit{address}\},\{\textit{time}\}\}.\]
\end{example}

\noindent
\textbf{Contributions.} Our contributions can be summarized as follows.
\begin{itemize}
\item We compare the notion of a key set with other notions of keys. In particular, primary keys are key sets with just one element, and certain keys are unary key sets, for which every key is a singleton.
\item We develop a naive quadratic as well as a linear-time algorithm for validating whether a given key set holds on a given data set, and experimentally demonstrate its efficiency on real-world data sets. The experiments also confirm how additional keys in a given key set help separate tuple pairs in real-world benchmark data.
\item We illustrate how automated reasoning tools for key sets can facilitate efficient updates and queries in database systems.
\item We establish a binary axiomatization for the implication problem of key sets. Here, binary refers to the maximum number of premises that any inference rule in our axiomatization can have. This is interesting as all previous notions of keys enjoy unary axiomatizations, in particular primary keys. What that means semantically is that every given key set that is implied by a set of key sets is actually implied by at most two of the key sets.
\item We establish that the implication problem for key sets is \emph{coNP}-complete. Again, this complexity is quite surprising in comparison with the linear time decidability of other notions of keys.
\item An interesting notion in database theory is that of Armstrong databases. A given class of constraints, such as keys, key sets, or other data dependencies \cite{thalheim:1991}, is said to enjoy Armstrong databases whenever for every given set of constraints in this class there is a single database with the property that for every constraint in the class, the database satisfies this constraint if and only if the constraint is implied by the given set of constraints. This is a powerful property as multiple instances over the implication problem reduce to validating satisfaction over the same Armstrong database. Consequently, the generation of Armstrong databases would create `perfect models' of a given constraint set, which has applications in the acquisition of requirements in database practice. We show that key sets do not enjoy Armstrong relations, as opposed to other classes of keys known from the literature.
\item Our proof techniques for our axiomatizations help us characterize the implication problem of key sets by a fragment of propositional logic under three-valued interpretations. The transformation between key sets and their propositional formulae becomes simpler in the presence of \texttt{NOT NULL} constraints, which disallow any occurrences of missing values in columns for which they are specified in practical database management systems, such as SQL.
\item We then identify an expressive fragment of key sets for which the associated implication problem can be characterized by a unary axiomatization and a quadratic-time algorithm. The fragment also enjoys Armstrong relations and we show how to generate them with conservative use of time and space.
\end{itemize}

From a conceptual point of view, previous research has demonstrated the elegance and robustness of using key sets to control entity integrity in databases with missing values. Our article establishes limitations and opportunities for the use of keys sets in controlling entity integrity from a computational point of view. These include their validation, minimization of overheads for integrity control, and computational support for the acquisition of key sets that encode application semantics. Figure~\ref{fig:overview} illustrates the areas of our contributions and the sequence of the associated computational problems we address in this article.

\begin{figure}
\centering\includegraphics[width=0.75\textwidth]{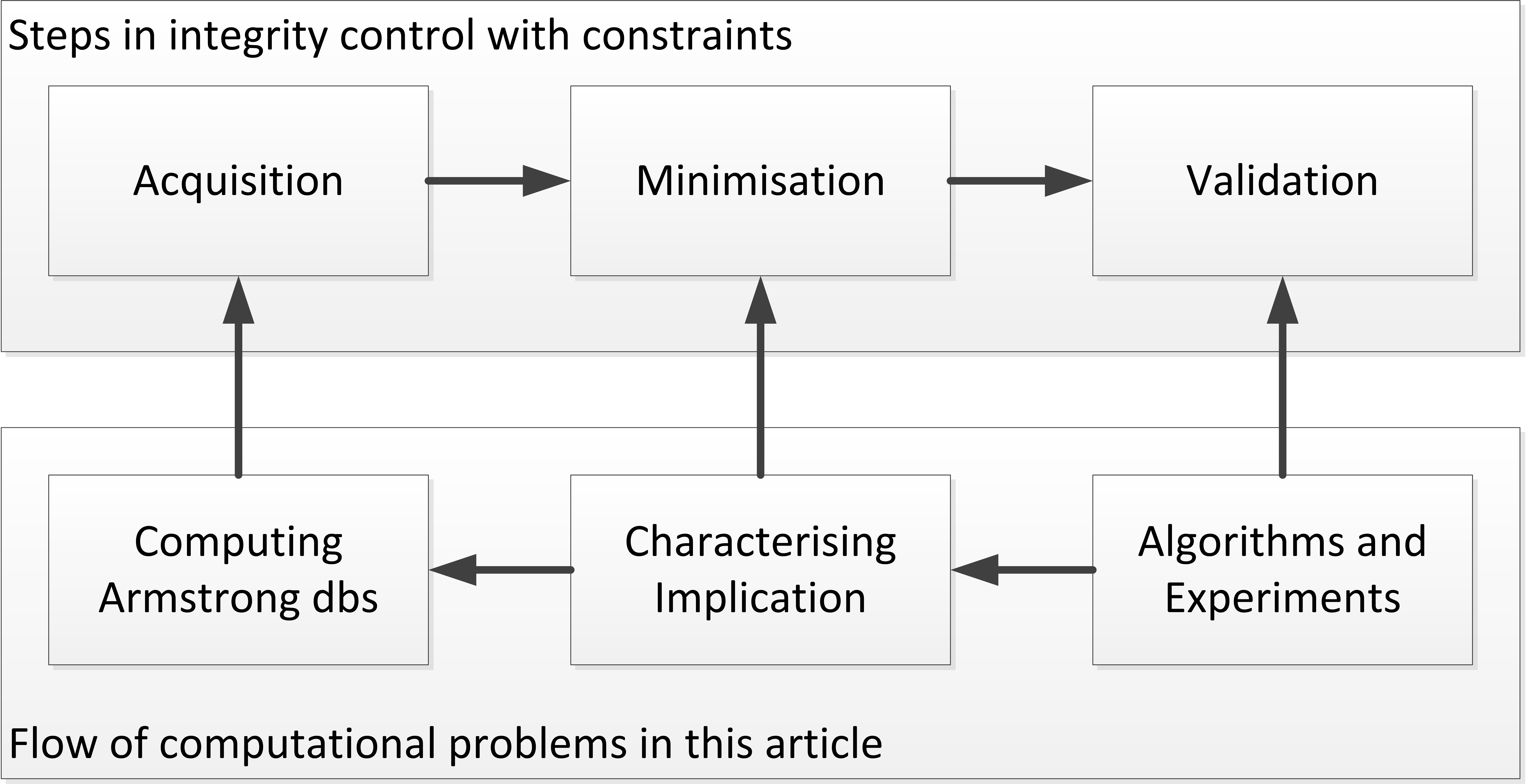}
\caption{Entity Integrity Control with Key Sets}\label{fig:overview}
\end{figure}

\noindent
\textbf{Organization.} We discuss related work in Section~\ref{s:related}. Basic notions and notation are fixed in Section~\ref{s:notation}. The validation problem is addressed in Section~\ref{s:validation}. Section~\ref{s:applications} discusses applications of key sets in the processing of queries and updates. An axiomatization for key sets is established in Section~\ref{s:axioms}. The \textit{coNP}-completeness of the implication problem is settled in Section~\ref{s:complexity}.  A characterization for the implication problem of key sets in terms of a fragment of a three-valued propositional logic is established in Section~\ref{s:logical}. The general existence of Armstrong relations is dis-proven in Section~\ref{s:Armstrong}. A computationally friendly fragment of key sets is identified in Section~\ref{s:fragment}. We conclude and briefly discuss future work in Section~\ref{s:conclusion}.

\section{Related Work}\label{s:related}

We provide a concise discussion on the relationship of key sets with other notions of keys over relations with missing information.

Codd is the inventor of the relational model of data \cite{Codd:1970:RMD:362384.362685}. He proposed the rule of entity integrity, which stipulates that every entity in every table should be uniquely identifiable. In SQL that led to the introduction of primary keys, which stipulate uniqueness and completeness on the attributes that form the primary key. The primary key is a distinguished candidate key. We call an attribute set a \emph{candidate key} for a given relation if and only if every pair of distinct tuples in the relation has no null marker occurrences on any of the attributes of the candidate key and there is some attribute of the candidate key on which the two tuples have non-matching values. The notions of primary and candidate keys have been introduced very early in the history of database research \cite{DBLP:journals/jcss/LucchesiO78}. Candidate keys are singleton key sets, that is, key sets with just one element (namely the candidate key). Hence, instead of having to be complete and unique on the same combination of columns in a candidate key, key sets offer different alternatives of being complete and unique for different pairs of tuples in a relation. Candidate keys were studied in \cite{DBLP:journals/cj/HartmannLL11}. In that work, the associated implication problem was characterized axiomatically and algorithmically, the automatic generation of Armstrong relations was established, and extremal problems associated with families of candidate keys were investigated. As Example~\ref{ex:intro} shows, there are relations on which no candidate key holds, but which satisfy key sets.

Lucchesi and Osborn studied computational problems associated with candidate keys \cite{DBLP:journals/jcss/LucchesiO78}. However, their focus was an algorithm that finds all candidate keys implied by a given set of functional dependencies. They also proved that deciding whether a given relation satisfies some key of cardinality not greater than some given positive integer is NP-complete. Recently, this problem was shown to be W[2]-complete in the size of the key \cite{Blasius0S16}. The discovery which key sets hold on a given relation is beyond the scope of this paper and left as an open problem for future work.

Key sets were introduced by Thalheim \cite{DBLP:journals/eik/Thalheim89} as a generalization of Codd's rule for entity integrity. He studied combinatorial problems associated with unary key sets, such as the maximum cardinality that non-redundant families of unary key sets can have, and which families attain them \cite{thalheim:1991,DBLP:journals/dam/Thalheim92}. Key sets were further discussed by Levene/Loizou \cite{DBLP:journals/ita/LeveneL01} where they also generalized Codd's rule for referential integrity. Somewhat surprisingly, the study of the implication problem for key sets has not been addressed by previous work. This is also true for other automated tasks which require reasoning about key sets.

More recently, the notions of possible and certain keys were proposed \cite{DBLP:journals/vldb/KohlerLLZ16}. These notions are defined for relations in which null marker occurrences are interpreted as `no information', and possible worlds of an incomplete relation are obtained by independently replacing null marker occurrences by actual domain values (or the N/A marker indicating that the value does not exist). A key is said to be \emph{possible} for an incomplete relation if and only if there is some possible world of the incomplete relation on which the key holds. A key is said to be \emph{certain} for an incomplete relation if and only if the key holds on every possible world of the incomplete relation. For example, the relation in Example~\ref{ex:intro} satisfies the possible key $p\langle room, name, address\rangle$, since the key \{\emph{room},\emph{name},\emph{address}\} holds on the possible world:
\begin{center}
\begin{tabular}{c@{\hspace*{.5cm}}c@{\hspace*{.5cm}}c@{\hspace*{.5cm}}c@{\hspace*{.5cm}}c}\hline
\textit{room} & \textit{name} & \textit{address} & \textit{injury} & \textit{time} \\ \hline
1             & Miller        & Dresden          & cardiac infarct & Sunday, 19 \\
2             & Maier         & Pirna            & skull fracture  & Monday, 19 \\
2             & Maier         & Dresden          & leg fracture    & Sunday, 16 \\
1             & Miller        & Pirna            & leg fracture    & Sunday, 16 \\ \hline
\end{tabular}
\end{center}
of the relation. In contrast, the key \{\emph{room},\emph{name}\} is not possible for the relation because the first and last tuple will have matching values on room and name in every possible world of the relation. The key \{\emph{address}\} is possible, but not certain, and the key \{\emph{room},\emph{time}\} is certain for the given relation. Now, it is not difficult to see that an incomplete relation satisfies the certain key $c\langle A_1,\ldots,A_n\rangle$ if and only if the relation satisfies the key set $\{\{A_1\},\ldots,\{A_n\}\}$. In this sense, certain keys correspond to key sets which have only singleton keys as elements. The papers \cite{DBLP:journals/vldb/KohlerLLZ16,DBLP:journals/pvldb/KohlerLZ15} investigate computational problems for possible and certain keys with \texttt{NOT NULL} constraints. In the current paper we investigate a different class of key constraints, namely key sets. In particular, the computationally-friendly fragment of key sets we identify in Section~\ref{s:fragment} subsumes the class of certain keys as the special case of unary key sets.

Recently, \emph{contextual keys} were introduced as a means to separate completeness from uniqueness requirements \cite{DBLP:conf/er/WeiLL17}. A contextual key is an expression $(C,X)$ where $X\subseteq C$. These are different from key sets since $X\subseteq C$ is a key for only those tuples that are complete on $C$. In particular, the special case where $C=X$ only requires uniqueness on $X$ for those tuples that are complete on $X$. This captures the \texttt{UNIQUE} constraint of SQL. Indeed, $\texttt{UNIQUE}(A_1,\ldots,A_n)$ holds on a given relation if and only if the possible key $p\langle A_1,\ldots,A_n\rangle$ holds on the relation \cite{DBLP:journals/vldb/KohlerLLZ16}. We leave it as future work to combine key sets and contextual keys into a unifying notion of contextual key sets.

\section{Preliminary Definitions}\label{s:notation}

In this section, we give some basic definitions and fix notation.

A \emph{relation schema} is a finite non-empty set of attributes, usually denoted by $R$. A \emph{relation} $r$ over $R$ consists of tuples $t$ that map each $A\in R$ to $\Dom(A)\cup \{\bot\}$ where $\Dom(A)$ is the domain associated with attribute $A$ and $\bot$ is the unique null marker. Given a subset $X$ of $R$, we say that a tuple $t$ is \emph{$X$-total} if $t(A)\neq \bot$ for all $A\in X$. Informally, a relation schema represents the column names of database tables, while each tuple represents a row of the table, so a relation forms a database instance. Moreover, $\Dom(A)$ represents the possible values that can occur in column $A$ of a table, and $\bot$ represents missing information. That is, if $t(A)=\bot$, then there is no information about the value $t(A)$ of tuple $t$ on attribute $A$.

In our running example, we have the relation schema
\begin{center}
\textsc{Ward}=\{\textit{room},\textit{name},\textit{address},\textit{injury},\textit{time}\}.
\end{center}
Each of these attributes comes with a domain, which we do not specify any further here. Each row of the table in Example~\ref{ex:intro} represents a tuple. The second row, for example, is $\{\textit{injury},\textit{time}\}$-total, but not total on any proper superset of $\{\textit{injury},\textit{time}\}$. The four tuples together constitute a relation over \textsc{Ward}.

The following definition introduces the central object of our studies. It was first defined by Thalheim in \cite{DBLP:journals/eik/Thalheim89}.

\begin{definition}\label{d:key-set}
A \emph{key set} is a finite, non-empty collection $\calX$ of subsets of a given relation schema $R$. We say that a relation $r$ over $R$ \emph{satisfies} the key set $\calX$ if and only if for all distinct $t,t'\in r$ there is some $X\in \calX$ such that $t$ and $t'$ are $X$-total and $t(X)\neq t'(X)$. Each element of a key set is called a \emph{key}. If all keys of a key set are singletons, we speak of a \emph{unary key set}.
\end{definition}

In the sequel we write $\calX,\calY,\calZ,\ldots $ for key sets and $X,Y,Z,\ldots$ for attribute sets, and $A,B,C,\ldots $ for attributes. We sometimes write $A$ instead of $\{A\}$ to denote the singleton set consisting of only $A$. If $\tuple X$ is a sequence, then we may sometimes write simply  $\tuple X$ for the set that consists of all members of $\tuple X$. In the following discussion, we use $|K|$ to denote the number of keys in the key set $K$, and $\|K\|$ to denote the total number of attribute occurrences in $K$.

As already mentioned in Example~\ref{ex:intro-2}, the relation in Example~\ref{ex:intro} satisfies the key sets $\mathcal{X}_1$, $\mathcal{X}_2$, and $\mathcal{X}$. It also satisfies the unary key set $\{\{\emph{room}\},\{\emph{time}\}\}$, but not the singleton key set $\{\{\emph{room},\emph{time}\}\}$. In Example~\ref{ex:intro-2}, we have $|K|=2$ and $\|K\|=5$.

A fundamental problem in automated reasoning about any class of constraints is the \emph{implication problem}. For key sets, the problem is to decide whether for an arbitrary relation schema $R$, and an arbitrary set $\Sigma\cup\{\varphi\}$ of key sets over $R$, $\Sigma$ implies $\varphi$. Indeed, $\Sigma$ \emph{implies} $\varphi$ if and only if every relation over $R$ that satisfies all key sets in $\Sigma$ also satisfies the key set $\varphi$. The following section illustrates how solutions to the implication problem of key sets can facilitate the efficient processing of queries and updates.

\section{Validating Key Sets}\label{s:validation}

In this section we will investigate the validation problem for key sets. This problem takes as input both a key set $\{X_1,\ldots,X_n\}$ and a given relation $r$, and returns `yes' if $\{X_1,\ldots,X_n\}$ satisfies $r$, and returns `no' otherwise. The validation problem is one of the most basic decision problems that are fundamental for automating integrity control management. Efficient solutions to this problem enable us to validate whether the given data - possible after updates - is compliant with the business rules encoded by the key set. In other words, a computer can check quickly whether the data is compliant or not. Throughout this section, we will use the following example.

\begin{exmp}\label{example:keys}
	Table~\ref{Incomplete relation} shows an incomplete relation $r$, where $\perp$ denotes a null marker occurrence. This relation does not satisfy any candidate key as null markers occur in columns \emph{address} and \emph{injury}, and tuples $t_1$ and $t_2$ have the same projection on $\{\emph{name},\emph{time}\}$. The relation only satisfies one certain key \cite{DBLP:journals/vldb/KohlerLLZ16}, which is $\{\emph{name},\emph{address},\emph{injury},\emph{time}\}$ as the two null marker occurrences can be replaced by any domain values without causing any duplication. For the key set $\cal X=\{\{\emph{name},\emph{address}\},\{\emph{injury}\},\{\emph{time}\}\}$ we obtain the following: Tuple pairs $t_1$ and $t_3$, and $t_1$ and $t_4$ are complete and different on $X=\{\emph{name},\emph{address}\}$, tuple pairs $t_1$ and $t_2$, and $t_2$ and $t_3$ are complete and different on $Y=\{\emph{injury}\}$, and tuple pairs $t_2$ and $t_4$, and $t_3$ and $t_4$ are complete and different on $Z=\{\emph{time}\}$. Hence, for every pair of different tuples there is some $X\in\cal X$ such that the tuple pair is complete and different on $X$. Hence, the relation $r$ satisfies the key set $X$.
\end{exmp}

\begin{table}[H]
	\centering
	\caption{A relation with missing data}
	\label{Incomplete relation}
\begin{tabular}{c@{\hspace*{.25cm}}|@{\hspace*{.25cm}}c@{\hspace*{.5cm}}c@{\hspace*{.5cm}}c@{\hspace*{.5cm}}c@{\hspace*{.5cm}}c}\hline
      & \textit{name} & \textit{address}  & \textit{injury} & \textit{time} \\ \hline
$t_1$ & Miller        & Dresden          & cardiac infarct & Sunday, 19 \\
$t_2$ & Miller        & $\perp$          & skull fracture  & Sunday, 19 \\
$t_3$ & Maier         & Dresden          & cardiac infarct & Sunday, 19 \\
$t_4$ & Maier         & Dresden          & $\perp$         & Monday, 20 \\ \hline
\end{tabular}
\end{table}

\subsection{Na\"{\i}ve validation}

The validation problem is a decision problem, so the output to every input is either `yes' or `no'. For instances that result in a `no', the answer is rather uninformative and the first reaction is probably to ask why. So, rather than looking at the validation problem, it makes more sense to return the set $r^{\cal X}_V\subseteq r$ of tuples in $r$ that violate a key set $\cal X$. This set $r^{\cal X}_V$ is the maximum subset of $r$ under subset inclusion with the property that if $t\in r^{\cal X}_V$, then there is some $t'\in r^{\cal X}_V$ with $t\not=t'$, and for all $X\in \cal X$, $t$ is not $X$-total or $t'$ is not $X$-total or $t(X)=t'(X)$. If $r^{\cal X}_V$ is empty, then the input relation satisfies the key set $\cal X$.

\begin{algorithm}[H]
	\caption{Quadratic-time algorithm\label{alg:naive}}
	\begin{algorithmic}
		\Require A key set $\cal X$, a relation $r$ over relation schema $R$
		\Ensure The maximum set $r^{\cal X}_V$ of tuples in $r$ that violate the key set $\cal X$
		\State $r^{\cal X}_V:=\emptyset$
		\For{$t,t'\in r$ where $t\neq t'$}
		\For{$X\in\cal X$}
		\If{for all $X\in\cal X$, ($t$ or $t'$ is not $X$-total) or $(t(X)=t'(X))$}
		\State $r^{\cal X}_V:=r^{\cal X}_V\cup\{t\}$
		\EndIf
		\EndFor
		\EndFor
		\Return $r^{\cal X}_V$
	\end{algorithmic}
\end{algorithm}

We begin with a na\"{\i}ve algorithm that computes the set $r^{\cal X}_V$ given relation $r$ and given key set $\cal X$ in time quadratic in the input. The pseudo-code is given in Algorithm~\ref{alg:naive}. The algorithm starts with an empty set $r^{\cal X}_V$. The first loop selects every pair of distinct tuples in the input relation $r$. Then, the algorithm will go through the given key set $\cal X$ and check if the pair of selected tuples violate all keys $X$ in the key set $\cal X$. If that is the case, both tuples will become part of $r^{\cal X}_V$. After every pair of distinct tuples in the input relation has been checked, the algorithm returns $r^{\cal X}_V$. It is evident that the algorithm is correct as it strictly follows the definition of a key set. The time complexity of Algorithm~\ref{alg:naive} is $\mathcal{O}(|r|^2\cdot \|\cal X\|)$ since every pair of distinct tuples is evaluated against every attribute occurrence in $\cal X$.

\begin{theorem}
Given a relation $r$ and a key set $\cal X$, Algorithm~\ref{alg:naive} computes the set $r^{\cal X}_V$ of violating tuples in time $\mathcal{O}(|r|^2\cdot \|\cal X\|)$.\qed
\end{theorem}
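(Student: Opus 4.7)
The plan is to verify both correctness and running time, each of which follows quite directly from the definitions; there is no substantive obstacle, only bookkeeping.

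For correctness, I would argue $r^{\cal X}_V$ equals the set returned by Algorithm~\ref{alg:naive} by double inclusion. For soundness, whenever a tuple $t$ is added to the output, the guard condition witnesses the existence of some $t'\neq t$ in $r$ such that for every $X\in\cal X$ either one of $t,t'$ fails to be $X$-total or $t(X)=t'(X)$; since $t'$ itself will also be added when the outer loop reaches the pair $(t',t)$, both $t$ and $t'$ end up in the output, matching the defining property of $r^{\cal X}_V$. For completeness, if $t\in r^{\cal X}_V$ then by definition there is a witness $t'\in r^{\cal X}_V$ with $t\neq t'$ satisfying the violation condition on every $X\in\cal X$; the outer loop visits this pair and the guard fires, so $t$ is inserted.

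For the complexity, I would observe that the outer loop performs $|r|(|r|-1)=O(|r|^2)$ iterations over ordered pairs of distinct tuples. For each such pair, the guarded condition ``for all $X\in\cal X$, $(t$ or $t'$ is not $X$-total$)$ or $(t(X)=t'(X))$'' can be evaluated by scanning every attribute occurrence in $\cal X$ at most once (stopping early as soon as a separating $X$ is found, though a worst-case bound does not rely on this), giving $O(\|\cal X\|)$ time per pair. Multiplying yields the claimed $O(|r|^2\cdot\|\cal X\|)$ bound; the inner \texttt{for} over $X\in\cal X$ written in the pseudocode does not change the asymptotics since each iteration of it re-evaluates the same universally quantified test, so the total work per pair is still $O(\|\cal X\|\cdot\|\cal X\|)$ in the most pessimistic reading, but the intended interpretation (a single pass testing the universal condition) gives $O(\|\cal X\|)$, which is what we report.

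The only mildly delicate point is making sure that the definition of $r^{\cal X}_V$ — which is phrased as ``the maximum subset closed under the witness property'' — really coincides with ``the set of tuples in some violating pair''; I would verify this by noting that the witness property is monotone, so the union of all violating pairs is itself closed and hence equals $r^{\cal X}_V$. With that observation, the correctness argument above is complete.
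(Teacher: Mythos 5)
Your proposal is correct and follows essentially the same route as the paper, which states the result with only a one-line justification (correctness because the algorithm ``strictly follows the definition of a key set,'' and the $\mathcal{O}(|r|^2\cdot\|\mathcal{X}\|)$ bound because every pair of distinct tuples is checked against every attribute occurrence in $\mathcal{X}$). Your write-up simply makes this explicit, and your two additional observations---that the maximum set $r^{\mathcal{X}}_V$ coincides with the union of violating pairs, and that the redundant inner loop over $X\in\mathcal{X}$ in the pseudocode must be read as a single pass over the universal test to get the stated bound---are sound refinements of the paper's informal argument rather than a different approach.
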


\subsection{Linear-time validation}

In Algorithm~\ref{alg:linear}, our strategy is to aggressively partition the input relation $r$ into different smaller subsets $b\subseteq r$. Strictly speaking, the subsets $b$ do not form a partition as i) incomplete tuples may occur in multiple subsets, and ii) tuples with unique complete projections on some key do not need to be tracked since they cannot contribute to the violation of the given key set. Indeed, each subset $b$ contains tuples which are either incomplete or have matching values on all previously examined keys of the input key set. By examining each key $X$ in the input key set $\mathcal{X}$, the subsets $b$ are progressively split into smaller subsets. Intuitively, each subset $b$ contains tuples such that every pair of distinct tuples from $b$ violate all the keys in the input key set that have been examined so far. The worst-case time complexity for Algorithm~\ref{alg:linear} is $O(|r|\cdot\|K\|)$. This is a huge improvement over Algorithm~\ref{alg:naive}. Before we present Algorithm~\ref{alg:linear} formally, we show how Algorithm~\ref{alg:linear} works on a real example.

\begin{exmp}
	Let $R=\{\emph{n(ame)},\emph{a(ddress)},\emph{i(njury)},\emph{t(ime)}\}$, $r = \{t_1,t_2,t_3,t_4\}$ and the key set $\mathcal{X} = \{\{n,a\},\{i\},\{t\}\}$. The output will be set $B_{\cal X}$ of subsets $b$ of $r$ such that for every $b\in B_{\cal X}$ and for all distinct $t,t'\in b$, $\{t,t'\}$ violate the key set $\mathcal{X}$. If no such $b$ exists, then the input relation satisfies the input key set.
	
Algorithm~\ref{alg:linear} starts with $B_{\cal X}=\{r\}$ where the only block is the input relation $r$.
\begin{itemize}
\item In the first iteration we pick $X=\{n,a\}\in\mathcal{X}$, and set $B:=\emptyset$. For our map $M$ and set $I$ of incomplete tuples we obtain:
    \[M(\text{Miller},\text{Dresden})=\{t_1\}, M(\text{Maier},\text{Dresden})=\{t_3,t_4\}\mbox{, and }I=\{t_2\}\]
    We then add every tuple in $I$ into each image where $M$ is defined, so we obtain the following for our map:
    \[M(\text{Miller},\text{Dresden})=\{t_1,t_2\}, M(\text{Maier},\text{Dresden})=\{t_2,t_3,t_4\}\]
    As $B$ is the set of all images of values on which $M$ is defined we have:
    \[B_{\cal X}=B=\{\{t_1,t_2\},\{t_2,t_3,t_4\}\}\]
    at the end of the first iteration.

\item In the second iteration we pick $X=\{i\}\in\mathcal{X}$, and set $B:=\emptyset$. For $b=\{t_1,t_2\}$ we obtain:
\[ M(\text{cardiac infarct})=\{t_1\}, M(\text{skull fracture})=\{t_2\}\mbox{, and}I=\emptyset\;.\]
No further changes are applied as we only have singleton images in the map and no incomplete tuples exist, so $B=\emptyset$. For $b=\{t_2,t_3,t_4\}$ we obtain:
\[ M(\text{skull fracture})=\{t_2\}, M(\text{cardiac infarct})=\{t_3\}\mbox{, and }I=\{t_4\}\;.\]
We then add every tuple in $I$ into each image where $M$ is defined, so we obtain:
\[ M(\text{skull fracture})=\{t_2,t_4\}, M(\text{cardiac infarct})=\{t_3,t_4\}\;.\]
Since $B$ is the set of all images of values on which $M$ is defined we have:
\[B_{\cal X}=B=\{\{t_2,t_4\},\{t_3,t_4\}\}\]
at the end of the second iteration.

\item For the third and final iteration we pick $X=\{t\}\in\mathcal{X}$, and set $B:=\emptyset$. For $b=\{t_2,t_4\}$ we obtain:
\[ M(\text{Sunday, 19})=\{t_2\}, M(\text{Monday, 20})=\{t_4\}\mbox{, and }I=\emptyset\;.\]
No further changes are applied as we only have singleton images in the map and no incomplete tuples exist, so $B=\emptyset$. For $b=\{t_3,t_4\}$ we obtain
\[M(\text{Sunday, 19})=\{t_3\}, M(\text{Monday, 20})=\{t_4\}\mbox{, and }I=\emptyset\;.\]
Again, no further changes are applied as we only have singleton images in the map and no incomplete tuples exist, so $B=\emptyset$.
\end{itemize}
As final output we return $B_{\cal X}=\emptyset$. That means $r$ satisfies $\mathcal{X}$.
\end{exmp}

Algorithm~\ref{alg:linear} shows the pseudo-code for computing the set $B_{\cal{X}}$ of all $\subseteq$-maximal subsets  $b\subseteq r$ such that for all pairs of distinct tuples $t,t'\in b$, $\{t,t'\}$ violates $\cal X$.  

\begin{algorithm}
	\caption{Linear-time Algorithm\label{alg:linear}}
	\begin{algorithmic}
		\Require A key set $\cal X$, a relation $r$ over relation schema $R$
		\Ensure Set $B_{\cal X}$ of $\subseteq$-maximal subsets $b\subseteq r$ such that for all distinct tuple pairs $t,t'\in b$, $\{t,t'\}$ violate $\cal X$
		\State $B_{\cal X}:=\{r\}$
        \For{$X\in\cal X$}
		\State $B:=\emptyset$
		\ForAll{$b\in B_{\cal X}$}
		\State Create an empty map $M$
		\State Create an empty set $I$
		\ForAll{$t\in b$}
		\If{$t[X]$ contains some missing value}
		\State $I:=I\cup\{t\}$
		\Else
		\State $M[t[X]]:=M[t[X]]\cup\{t\}$
		\EndIf
		\EndFor
		\ForAll{$y$ such that $M[y]$ exists}
		\State $M[y]:=M[y]\cup I$
		\If{$|M[y]|>1$}
		\State $B:=B\cup\{M[y]\}$
		\EndIf
		\EndFor
		\If{there is no $y$ such that $M[y]$ exists and $|I|>1$}
		\State $B:=B\cup\{I\}$
		\EndIf
		\EndFor
	    \State $B_{\cal X}:=B$
		\EndFor
		\Return $B_{\cal X}$
	\end{algorithmic}
\end{algorithm}

Evidently, different tuples will and can only occur in the same subset $b$ at the end of the computation, if it is true for every key in the key set that they have either matching complete values on all the columns of the key, or one of the tuples has a null marker occurrence on some column in the key.  

\begin{theorem}
Given a relation $r$ and a key set $\cal X$, Algorithm~\ref{alg:linear} computes the set $B_\mathcal{X}$ of all maximal subsets $b\subseteq r$ such that for all distinct $t,t'\in b$, $\{t,t'\}$ violates $\mathcal{X}$, in time $\mathcal{O}(|r|\cdot \|\cal X\|)$.\qed
\end{theorem}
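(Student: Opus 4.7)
The argument splits into correctness and complexity. For correctness, I would enumerate the keys of $\mathcal{X}$ as $X_1,\ldots,X_m$ in the order picked by the outer loop, and prove by induction on $k\in\{0,1,\ldots,m\}$ the invariant: after the $k$-th outer iteration, $B_{\mathcal{X}}$ is exactly the family of $\subseteq$-maximal subsets $b\subseteq r$ such that every pair of distinct tuples $t,t'\in b$ violates each of $X_1,\ldots,X_k$ (that is, for every $i\leq k$, either $t$ or $t'$ is not $X_i$-total, or $t(X_i)=t'(X_i)$). The base case $k=0$ is immediate from the initialization $B_{\mathcal{X}}=\{r\}$, since the condition is vacuous.

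For the inductive step, I would analyze how a single iteration on $X=X_{k+1}$ transforms a block $b$ from the previous stage. The map $M$ buckets $X$-total tuples by their $X$-projection and $I$ collects the $X$-incomplete tuples; two distinct tuples $t,t'\in b$ violate $X$ precisely when they land in the same $M[y]$ or when at least one of them lies in $I$. The pseudocode realises this by adding every tuple of $I$ to every non-empty $M[y]$ and then harvesting the non-singleton buckets. The guard that adds $I$ as a standalone block only when no $M[y]$ exists (and $|I|>1$) is needed to preserve $\subseteq$-maximality: if some $M[y]$ is present then $I$ is already contained in $M[y]\cup I$, so reporting $I$ separately would break maximality. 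Combined with the inductive hypothesis that the previous $B_{\mathcal{X}}$ listed the maximal sets violating $X_1,\ldots,X_k$, this yields the invariant at step $k+1$. Setting $k=m$ gives the theorem.

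For complexity, the strategy is a charging argument. In the iteration on key $X$, I would charge $O(|X|)$ work to each tuple $t$ that appears in any block $b\in B_{\mathcal{X}}$ for the cost of hashing $t[X]$ and routing $t$ into $M$ or $I$. The subsequent spreading of $I$ across the buckets of $M$ must be handled with care: representing $I$ implicitly as a shared pointer inserted into every $M[y]$ avoids physical duplication and keeps the per-tuple, per-iteration cost at $O(|X|)$. Under such an implementation, one outer iteration runs in $O(|r|\cdot|X|)$ time, and summing over $X\in\mathcal{X}$ gives $O(|r|\cdot\|\mathcal{X}\|)$. The main obstacle here is exactly this amortisation: a tuple with many null occurrences can inhabit many blocks simultaneously, so ruling out a blow-up beyond the stated bound requires either explicit pointer-sharing in the implementation or a potential-function argument that absorbs the repeated appearances into the attribute-occurrence budget $\|\mathcal{X}\|$.
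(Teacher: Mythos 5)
The paper itself states this theorem without proof (only an informal remark precedes it), so your outline—induction over the keys processed plus a charging argument—is in the right spirit, but both halves have genuine gaps. For correctness, your invariant that after processing $X_1,\dots,X_k$ the family $B_{\mathcal{X}}$ is \emph{exactly} the family of $\subseteq$-maximal subsets whose pairs violate $X_1,\dots,X_k$ is false for the pseudocode as written, and your argument only checks maximality \emph{within} a single parent block. Because a tuple that is incomplete on an earlier key is copied into several blocks, two parents $b\neq b'$ can overlap, and a child $M[y]\cup I$ produced inside $b$ can be strictly contained in a child produced inside $b'$; the algorithm emits both. Concretely, take $R=\{A,B,C\}$, $\mathcal{X}=\{\{A\},\{B\}\}$ and tuples $u=(\bot,5,1)$, $v=(\bot,5,2)$, $w=(2,5,3)$, $z=(1,6,4)$. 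After the key $\{A\}$ the blocks are $\{z,u,v\}$ and $\{w,u,v\}$; processing $\{B\}$ then outputs both $\{u,v\}$ (from the first block) and $\{u,v,w\}$ (from the second), although $\{u,v\}$ is not maximal. So the inductive step needs either a weaker invariant (every output block is a violating set and every maximal violating set appears among the output blocks, which is provable and suffices for validation) or an added subsumption-pruning step in the algorithm; ``exactly the maximal subsets'' cannot be established as you state it. (A smaller point: blocks of size one are never output, e.g.\ the algorithm returns $\emptyset$ when $r$ satisfies $\mathcal{X}$, so maximality must be read among sets of at least two tuples.)

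For the complexity, you correctly identify the amortisation as the main obstacle, but the fix you sketch does not close it. Sharing $I$ by pointer avoids duplicating storage inside one parent's processing, but in the next outer iteration every block $M[y]\cup I$ is scanned again from scratch, so the shared tuples are re-enumerated once per block. The total size $\sum_{b\in B_{\mathcal{X}}}|b|$ can genuinely blow up: with $|r|/2$ completely null tuples and $|r|/2$ tuples pairwise distinct on a first singleton key, one iteration produces $|r|/2$ blocks each of size $|r|/2+1$, so the next iteration already costs $\Omega(|r|^2)$ even for unary keys, exceeding the claimed $\mathcal{O}(|r|\cdot\|\mathcal{X}\|)$. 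Hence the per-tuple, per-iteration charge of $O(|X|)$ is simply not available for the pseudocode as given; obtaining the stated bound requires a genuinely different treatment of the $X$-incomplete tuples (for instance, classifying each tuple once per key globally and representing blocks as a complete part plus shared incomplete parts) together with an accounting argument you have not supplied, or an explicit assumption limiting replication. As it stands, both the exact-maximality claim and the linear-time claim remain unproved by your proposal.
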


\subsection{Experiments}

For the benefit of complementing our theoretical worst-case time complexity analysis with some actual runtimes of our algorithms, we conducted a few experiments with publicly available data sets. These have served as benchmark data sets for the discovery problem of a variety of data dependencies \cite{}.  Our algorithms and experiments were implemented in Python 3.

\noindent
\textbf{Data sets}. Our experiments are based on a collection of real-world data sets with missing data. Basic statistics of the data sets are shown in Table~\ref{datasets}. We use $\#R$,  $\#C$ and $\#\perp$ to denote the number of rows, columns and missing data values in the given data set, respectively. As we can tell, these real-world data sets have indeed different characteristics. Particularly, more than 40\% of values in the data set {\em plista} are missing, but only 7.5\% of values are missing in the data sets {\em hepatitis} and {\em echocardigram}.

\begin{table}[t]
	\centering
	\caption{Incomplete Datasets}
	\label{datasets}
	\begin{tabular}{|l|l|l|l|}
		\hline
		Dataset                 & \#R  & \#C & \#$\perp$     \\ \hline
		horse                   & 300  & 28  & 1605       \\ \hline
		bridges                 & 108  & 13  & 77          \\ \hline
		hepatitis               & 155  & 20  & 167          \\ \hline
		breast-cancer-wisconsin & 691  & 11  & 16           \\ \hline
		echocardiogram          & 132  & 13  & 132         \\ \hline
		plista                  & 996  & 63  & 23317        \\ \hline
		flight                  & 1000 & 109 & 51938       \\ \hline
		ncvoter                 & 1000 & 19  & 2863        \\ \hline
	\end{tabular}
\end{table}

\noindent
\textbf{Generation of key sets}. For our experiments, we use two algorithms to generate synthetic key sets. For each data set with $n$ attributes, the first algorithm creates $n$ key sets. For each key set $\mathcal{X}_i$ where $1 \leq i \leq n$, $$\mathcal{X}_i =\{\{A_1,\ldots, A_i\},\{A_{i+1}\},\ldots, \{A_n\} \}$$ where $A_i$ denotes the $i$-th attribute in the data set. The second algorithm is similar to the first, but for each key set, the first key is randomly selected and the remaining attributes are used to construct singleton keys. For example, if we generate a key set $\mathcal{X} = \{k_1, \ldots, k_{n+1-m}\}$ over a relation schema $R$ with $n$ attributes, then $k_1$ will be generated by randomly selecting $m$ attributes from $R$, and $k_2,\ldots,k_{n+1-m}$ will each denote a singleton subset made up of the remaining $n-m$ attributes of $R$.

\subsection{Experiment 1 - Run time efficiency}

\begin{example}
	In this experiment, we use the first key set generation algorithm with cardinalities from 1 to $n$, where $n$ denotes the number of attributes in a data set. We learn how $|\mathcal{X}|$ affects the satisfaction of the key set $\mathcal{X}$. For example, in the data set 'bridges', a decrease of $|K|$ from 11 to 10 results in a violation of the key set \[\mathcal{X}'=\{\{0,1,2\},\{3\},\{4\},\{5\},\{6\},\{7\},\{8\},\{9\},\{10\},\{11\},\{12\}\}\;.\] Indeed, with the key set \[\mathcal{X}=\{\{0,1,2,3\},\{4\},\{5\},\{6\},\{7\},\{8\},\{9\},\{10\},\{11\},\{12\}\},\] the following six tuples contribute to a violation: 
	\begin{center}
	('E54', 'Y', '?', '1908', 'HIGHWAY', '1240', '?', 'G', '?', 'STEEL', 'MEDIUM', 'F', 'SIMPLE-T'), \\
	('E100', 'O', '43', '1982', 'HIGHWAY', '?', '?', 'G', '?', '?', '?', 'F', '?'),  \\
	('E56', 'M', '23', '1909', 'HIGHWAY', '?', '?', 'G', 'THROUGH', 'STEEL', 'MEDIUM', 'F', 'SIMPLE-T'), \\
	('E40', 'M', '22', '1893', 'HIGHWAY', '?', '2', 'G', 'THROUGH', 'STEEL', 'MEDIUM', 'F', 'SIMPLE-T'),  \\
	('E109', 'A', '28', '1986', 'HIGHWAY', '?', '?', 'G', '?', '?', '?', 'F', '?'), and \\
	('E39', 'A', '25', '1892', 'HIGHWAY', '?', '2', 'G', 'THROUGH', 'STEEL', 'MEDIUM', 'F', 'SIMPLE-T')\;.
   \end{center}
	The main reason of this violation is due to the null marker occurrence on attribute 2 in tuple E54.  For example, tuples $E54$ and $E56$ violate the key set $\mathcal{X}$ because: i) the null marker occurrence on attribute 2 in tuple $E54$ violates the key $\{0,1,2,3\}$, and ii) on all the singleton keys in this key set, both tuples have matching values or a null marker occurrence (and thereby violating each of these keys as well). 
\end{example}

In our first experiment we determine the run-time of our two algorithms for validating key sets on our real-world data sets. For each data set, we use the synthetic key sets generated by the first key set generation algorithm with cardinalities from 1 to 
$n$, where $n$ denotes the number of attributes in a data set. For each key set, we run each algorithm 10 times. We report the average run-time of our two algorithms for each of our data sets, where the average is taken over all runs over all cardinalities. The results are shown in Table~\ref{datasets2}.

\begin{table}[H]
	\centering
	\caption{Run-time comparison (in s) for validating synthetic key sets on benchmark data}
	\label{datasets2}
	\begin{tabular}{|l|l|l|}
		\hline
		Dataset                  & Algorithm~\ref{alg:naive}    & Algorithm~\ref{alg:linear}   \\ \hline
		horse                    & 62.628  & 2.228 \\ \hline
		bridges                  & 1.991   & 0.063   \\ \hline
		hepatitis                & 8.361   & 0.443   \\ \hline
		breast-cancer-wisconsin  & 64.266  & 0.104   \\ \hline
		echocardiogram           & 2.822   & 0.290   \\ \hline
		plista                   & too long     & 2.379   \\ \hline
		flight                   & too long     & 1.958  \\ \hline
		ncvoter                  & 370.416 & 8.036   \\ \hline
	\end{tabular}
\end{table}

The results show clear practical run-time benefits for developing a sophisticated linear-time algorithm that validates key sets. 

\subsection{Experiment 2 - Run time efficiency by cardinality of key sets}

In this experiment, we break down our analysis of the run-time efficiencies of Algorithm~\ref{alg:naive} and Algorithm~\ref{alg:linear} based on a given cardinality of the key sets. We use the second key set generation algorithm to generate random key sets. For each given cardinality that applies to a given data set, we create 100 different key sets randomly. Then we apply our algorithms to each of the key sets. This enables us to observe the run-time of these algorithms in relationship to the cardinalities of a given key set. Interestingly, the results we obtain for the two algorithms are quite different. The run-time behaviors of our two algorithms are visualized in Figure~\ref{exp2-alg1} and Figure~\ref{exp2-alg2}, respectively.



\begin{figure}[H]
	\begin{multicols}{2}
		\includegraphics[width=\linewidth]{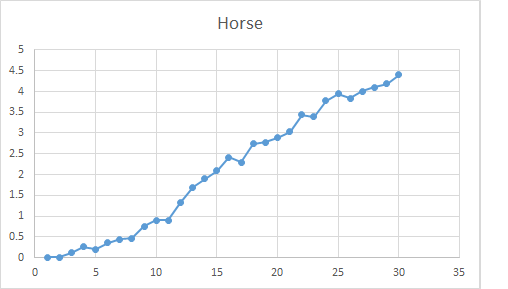}\par
		\includegraphics[width=\linewidth]{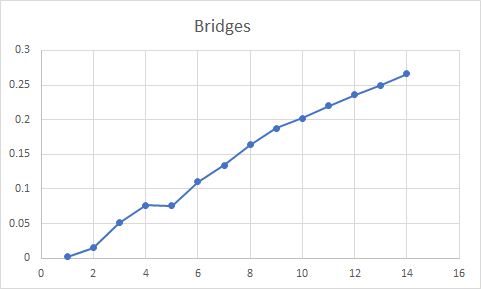}\par
	\end{multicols}
	\begin{multicols}{2} 
		\includegraphics[width=\linewidth]{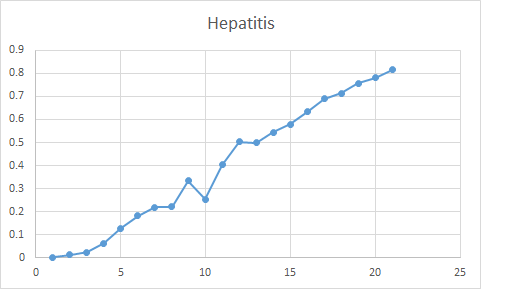}\par
		\includegraphics[width=\linewidth]{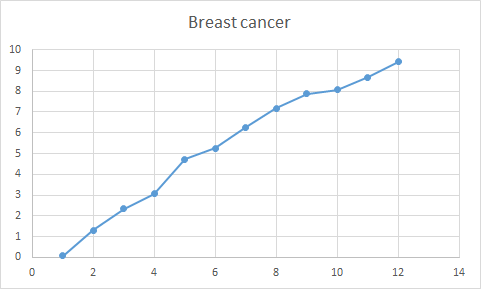}\par
	\end{multicols}
    \begin{multicols}{2}
    	\includegraphics[width=\linewidth]{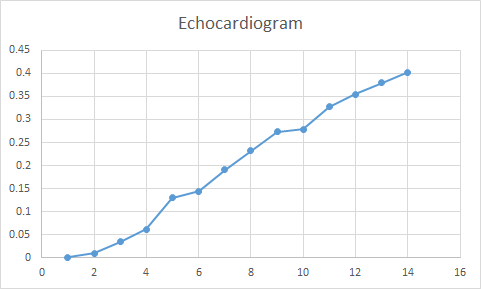}\par
    \includegraphics[width=\linewidth]{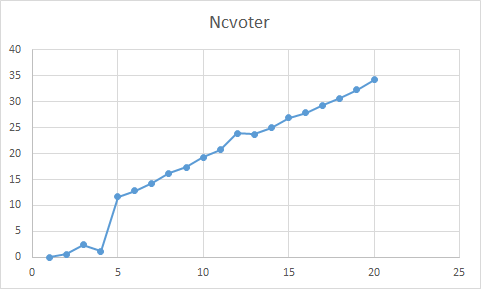}\par	
    \end{multicols}	
	\caption{Run-time of Algorithm~\ref{alg:naive} by the cardinality of the given key sets\label{exp2-alg1}}
\end{figure}

From Figure~\ref{exp2-alg1} we can observe a linear increase of the run-time in the cardinality of the key sets, where the size of the key set is fixed. Indeed, Algorithm~\ref{alg:naive} analyses all the given tuple pairs for each of the keys in the given key set.

\begin{figure}[H]
	\begin{multicols}{2}
		\includegraphics[width=\linewidth]{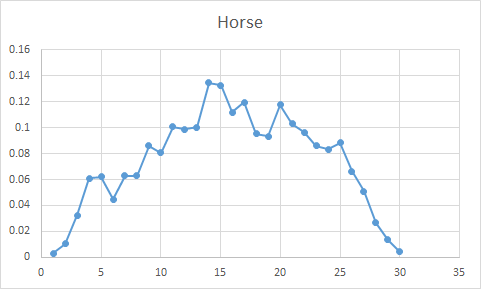}\par
		\includegraphics[width=\linewidth]{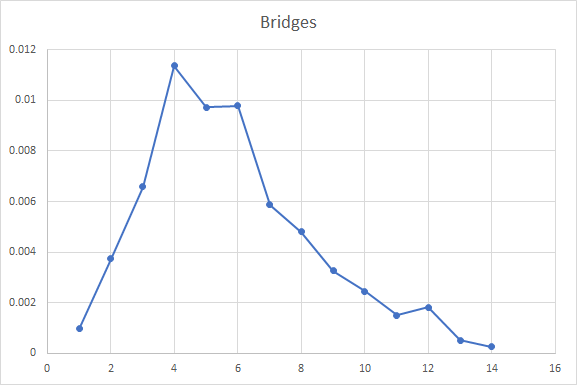}\par
	\end{multicols}
	\begin{multicols}{2}
		\includegraphics[width=\linewidth]{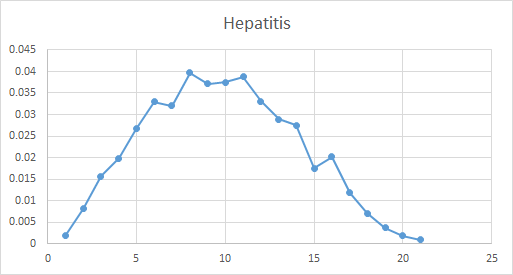}\par
		\includegraphics[width=\linewidth]{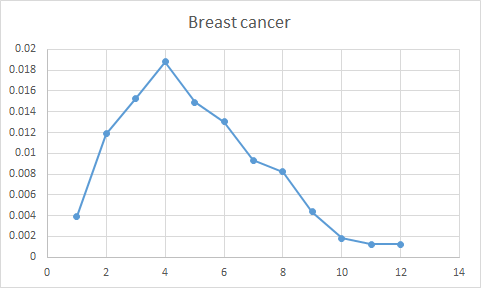}\par
	\end{multicols}
    \begin{multicols}{2}
		\includegraphics[width=\linewidth]{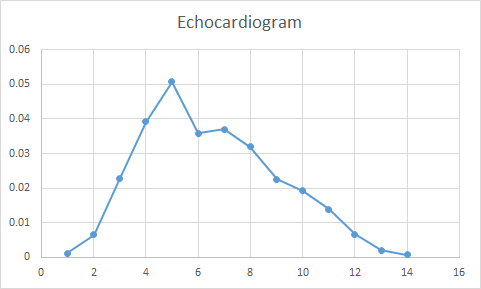}\par
        \includegraphics[width=\linewidth]{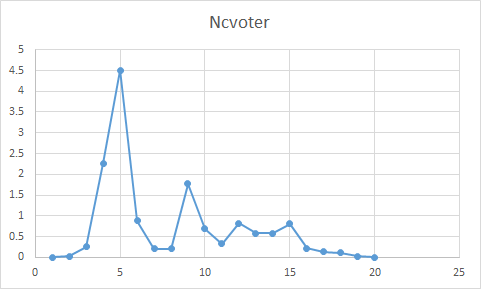}\par		
   \end{multicols}
	\begin{multicols}{2}
		\includegraphics[width=\linewidth]{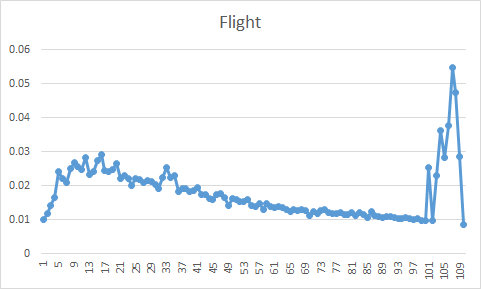}\par	
		\includegraphics[width=\linewidth]{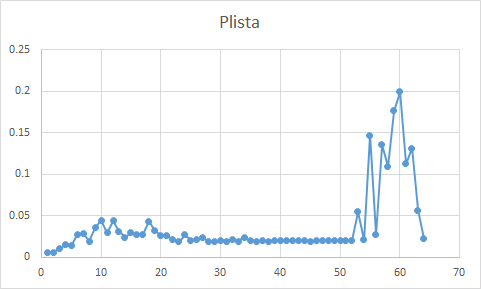}\par
	\end{multicols}
	\caption{Run-time of Algorithm~\ref{alg:linear} by the cardinality of the given key sets\label{exp2-alg2}}
\end{figure}

However, the run-time behavior of Algorithm~\ref{alg:linear} does not follow this pattern. The reason is that the number of tuples that needs to be analyzed by Algorithm~\ref{alg:linear} becomes less as more keys of the given key set are examined. Hence, the total number of tuples examined by Algorithm~\ref{alg:linear} decreases for a larger number of keys in the given key set. In fact, the run-time behavior of Algorithm~\ref{alg:linear} shows actual insight. Indeed, for different key sets of the same size and different cardinalities, there are more tuples that contribute to the violation of key sets that have lower cardinality than other key sets. 

\subsection{Experiment 3 - Offending Tuples}

We will now illustrate experimentally how the increase in the cardinality of the given key set affects the number of offending tuples. For that purpose we have applied Algorithm~\ref{alg:linear}. In Figure~\ref{fig:tuples}, we show the average number of  tuples that offend the given key set found by our previous experiment. The figure shows how the average number of offending tuples decreases with the cardinality of the given key set. Again, this explains why Algorithm~\ref{alg:linear} performs efficiently: the more of the keys of the given key set have been processed, the fewer tuples remain that offend all of those keys.


\begin{figure}[H]
	\begin{multicols}{2}
		\includegraphics[width=\linewidth]{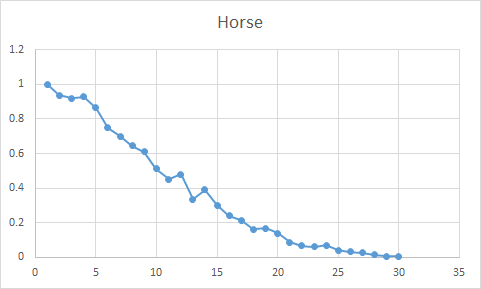}\par
		\includegraphics[width=\linewidth]{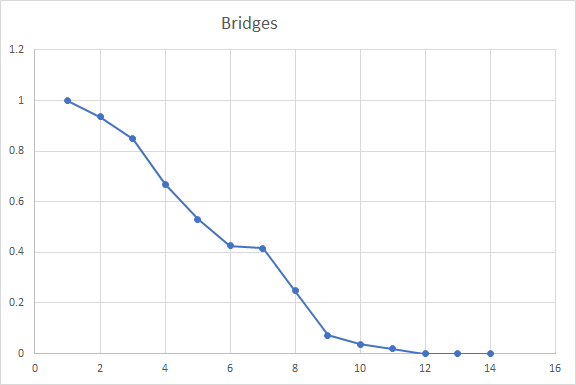}\par
	\end{multicols}
	\begin{multicols}{2}
		\includegraphics[width=\linewidth]{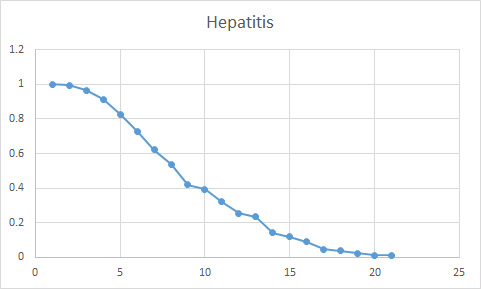}\par
		\includegraphics[width=\linewidth]{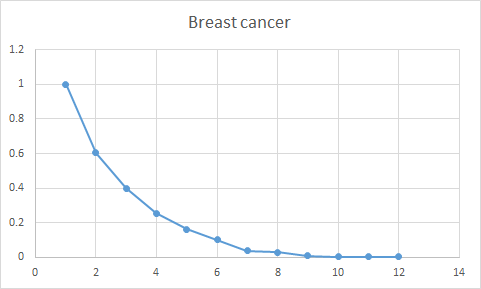}\par
	\end{multicols}
   \begin{multicols}{2}
		\includegraphics[width=\linewidth]{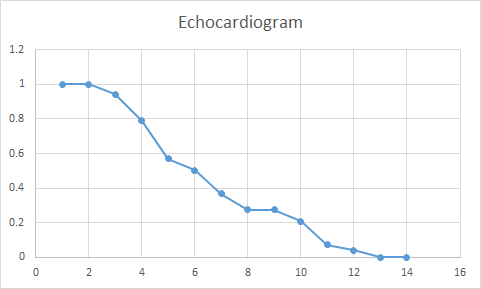}\par
        \includegraphics[width=\linewidth]{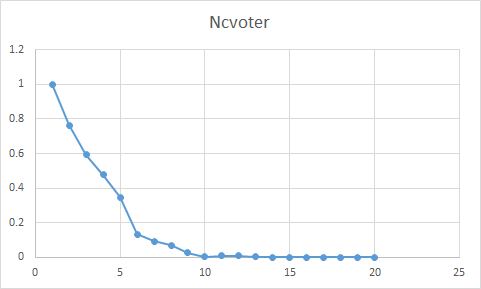}\par
   	\end{multicols}
	\begin{multicols}{2}
		\includegraphics[width=\linewidth]{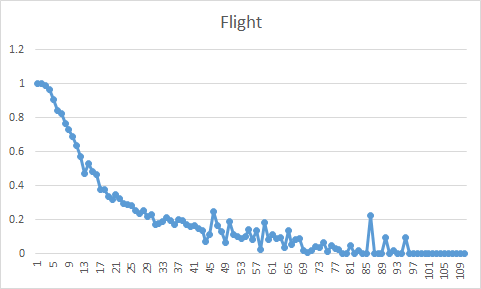}\par	
		\includegraphics[width=\linewidth]{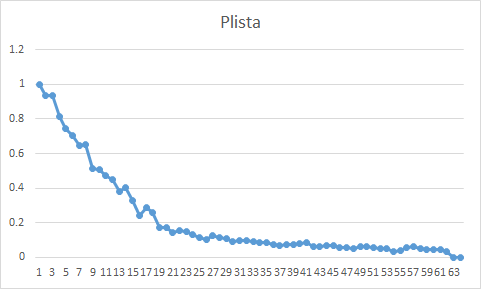}\par
	\end{multicols}
	\caption{Average number  of offending tuples by the cardinality of the given key set\label{fig:tuples}}
\end{figure}

\subsection{Experiment 4 - Adding Keys to Establish Entity Integrity}

For our final experiment, we want to illustrate how the percentage of violations decreases with a growing number of keys in a key set. This confirms our main motivation for studying key sets as a major mechanism to establish entity integrity in relations with missing values. For our experiment, we looked at the benchmark data sets and randomly generated keys sets with different cardinalities to test how easy entity integrity can be established. For a fixed cardinality on a given data set, we randomly generated 100 key sets and recorded the percentage of violations. The results are shown in Figure~\ref{fig:violations}. 


\begin{figure}[H]
	\begin{multicols}{2}
		\includegraphics[width=\linewidth]{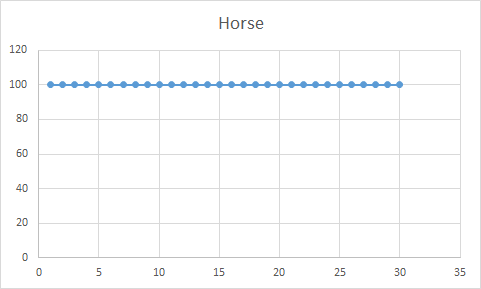}\par
		\includegraphics[width=\linewidth]{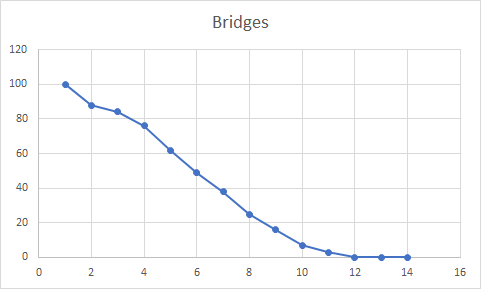}\par
	\end{multicols}
	\begin{multicols}{2}
		\includegraphics[width=\linewidth]{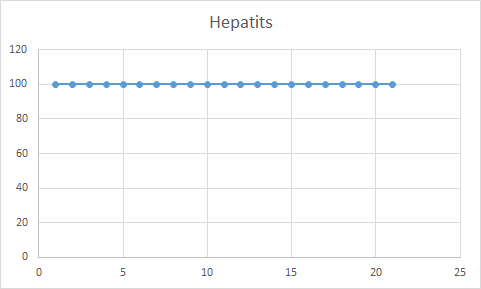}\par
		\includegraphics[width=\linewidth]{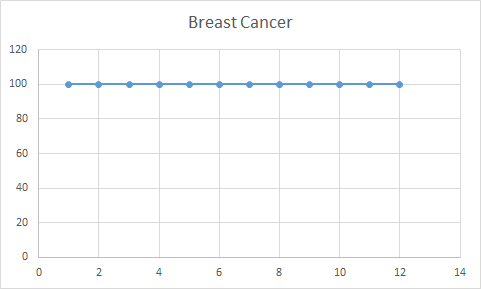}\par
	\end{multicols}
   \begin{multicols}{2}
		\includegraphics[width=\linewidth]{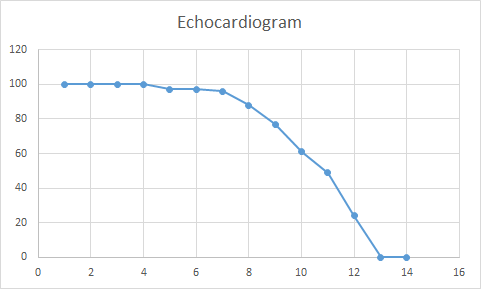}\par
        \includegraphics[width=\linewidth]{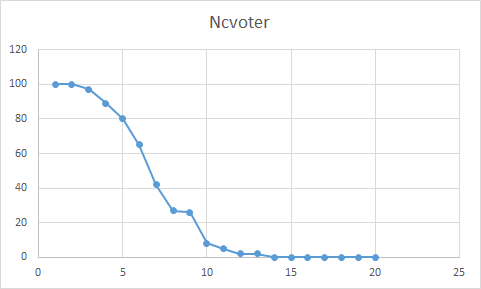}\par		
   \end{multicols}
	\begin{multicols}{2}
		\includegraphics[width=\linewidth]{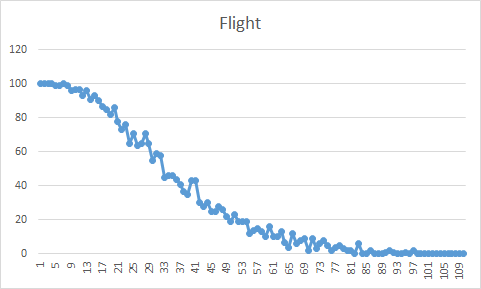}\par	
		\includegraphics[width=\linewidth]{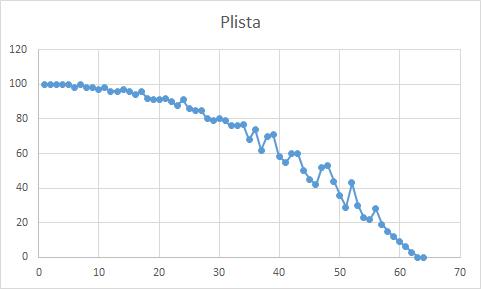}\par
	\end{multicols}
	\caption{Percentage of Violations by Cardinality of Key Sets\label{fig:violations}}
\end{figure}

The data sets ``Horse" , ``Breast Cancer", and ``Hepatitis" illustrate the reality that data sets with duplicate tuples can never achieve entity integrity. That is, whenever there are two different tuples with matching values on all columns, then no key can distinguish them, and therefore also no key set. However, when a data set does not contain any duplicate tuples, then our experiment illustrates clearly how additional keys can further separate tuples that are indistinguishable by previously used keys. 

\subsection{Summary}

Key sets serve as a natural mechanism to establish entity integrity in data sets with missing values. Indeed, the restriction to separate all pairs of distinct tuples by the same key seems unnatural. For example, when we want to identify a person we may want to use biometric measures such as fingerprints or retina scans. Indeed, if one technology fails to obtain the measure, the other one may still work. In this section, we have established an algorithm that decides whether a given key set is satisfied by a given relation with missing values. The algorithm works through the given relation as many times as there are keys in the given key set. However, in each scan it does not need to separate tuple pairs that have already been separated in previous runs. Experiments with our algorithm show how the number of keys in a key set lowers the number of offending tuples in real-world benchmark data sets, and how the addition of keys to a key set helps establish entity integrity in data sets with missing values. 

\section{Applications for Automated Reasoning}\label{s:applications}

We illustate some applications of key sets for automated reasoning in databases.  The most important applications of processing data are updates and queries. We briefly describe in this section how automated reasoning about key sets can facilitate each of these application areas.

\subsection{Efficient Updates}

When databases are updated it must be ensured that the resulting database satisfies all the constraints that model the business rules of the underlying application domain. Violations of the constraints indicate sources of inconsistency, and an alert of such inconsistencies should at least be issued to the database administrator. This is to ensure that appropriate actions can be taken, for example, to disallow the update. This quality assurance process incurs an overhead in terms of the time it takes to validate the constraints. As such, users of the database expect that such overheads are minimized. In particular, the time on validating constraints increases with the volume of the database. As a principal, the set of constraints that are specified on the database and therefore subject to validation upon updates, should be non-redundant. That is, no constraints should be specified that are already implied by other specified constraints. The simple reason is that the validation of any implied constraints is a waste of time because the validity of the other constraints already ensures that any implied constraint is valid as well. This is a strong real-life motivation for developing tools that can decide implication. In our running example, the set $\Sigma=\{\mathcal{X}_1,\mathcal{X}_2,\mathcal{X}\}$ of key sets is redundant because the subset $\Sigma'=\{\mathcal{X}_1,\mathcal{X}_2\}$ implies the key set $\mathcal{X}$. Automated solutions to the implication problem can thus automatize the minimization of overheads in validating constraints under database updates.

\subsection{Efficient Queries}

We are interested in the names of patients that can be identified uniquely based on information about their name and the room and time at the accident ward, or based on information about their injury and the time at the accident ward. In SQL, this may be expressed as follows.

\begin{center}
\begin{tabular}{rl}
\texttt{SELECT}   & \emph{name} \\
\texttt{FROM}     & \textsc{ward} \\
\texttt{WHERE}    & \emph{room} \texttt{IS NOT NULL AND} \emph{name} \texttt{IS NOT NULL AND} \\
                  & \emph{time} \texttt{IS NOT NULL} \\
\texttt{GROUP BY} & \emph{room}, \emph{name}, \emph{time} \\
\texttt{HAVING}   & $\texttt{count}(\emph{room},\emph{name},\emph{time})\le 1$ \\
\texttt{UNION}    & \\
\texttt{SELECT}   & \emph{name} \\
\texttt{FROM}     & \textsc{ward} \\
\texttt{WHERE}    & \emph{injury} \texttt{IS NOT NULL AND} \emph{time} \texttt{IS NOT NULL} \\
\texttt{GROUP BY} & \emph{injury}, \emph{time} \\
\texttt{HAVING}   & $\texttt{count}(\emph{injury},\emph{time})\le 1$\;; \\
\end{tabular}
\end{center}

\noindent
Knowing that the underlying relation over \textsc{Ward} satisfies the two key sets $\mathcal{X}_1$ and $\mathcal{X}_2$ and that the key set $\mathcal{X}=\{\{\emph{room},\emph{name},\emph{time}\}, \{\emph{injury},\emph{time}\}\}$ is implied by $\mathcal{X}_1$ and $\mathcal{X}_2$, one can deduce that
every tuple of \textsc{Ward} must be in at least one of the sub-query results of the \texttt{UNION} query. That is, the query above can be simplified to
\begin{center}
\begin{tabular}{rl}
\texttt{SELECT DISTINCT}   & \emph{name} \\
\texttt{FROM}              & \textsc{ward}\;; \\
\end{tabular}
\end{center}
Note that the \texttt{DISTINCT} word is necessary since the \texttt{UNION} operator eliminates duplicates. When evaluated on the example from the introduction, each query will return the result \{(\emph{name}: Miller),(\emph{name}: $\perp$), (\emph{name}: Maier)\}.

Motivated by the applications of key sets for data processing and the lack of knowledge on automated reasoning tasks associated with key sets, the following sections will investigate the implication problem for key sets.

\section{Axiomatizing Key Sets}\label{s:axioms}

\begin{table}[t]
\caption{An axiomatization $\A$ for key sets\label{tab}}
\[\fbox{$\begin{array}{c@{\hspace*{1cm}}c@{\hspace*{1cm}}c}
\cfrac{\calX}{\calX\cup \calY} & \cfrac{\calX\cup\{XY\}}{\calX\cup\{X,Y\}} &\cfrac{\calX_1\quad \calX_2}{\{Z_{(X_1,X_2)}\mid (X_1,X_2) \in \calX_1\times \calX_2\}}\vspace{1ex}\\
&&Z_{(X_1,X_2)}\sub X_1\cup X_2\text{, and }\\
&&X_1 \sub Z_{(X_1,X_2)}\text{ or }X_2 \sub Z_{(X_1,X_2)}\\
\textbf{Upward closure}&\textbf{Refinement}&\textbf{Composition}
\end{array}$}\]
\end{table}

In this section we establish axiomatizations for arbitrary key sets as well as unary ones. This will enable us to effectively enumerate all implied key sets, that is, to determine the semantic closure $\Sigma^\ast=\{\sigma\mid\Sigma\models\sigma\}$ of any given set $\Sigma$ of key sets. A finite axiomatization facilitates human understanding of the interaction of the given constraints, and ensures all opportunities for the use of these constraints in applications can be exploited.

In using an axiomatization we determine the semantic closure by applying \emph{inference rules} of the form $\cfrac{\text{premise}}{\text{conclusion}}$. For a set $\mathfrak{R}$ of inference rules let $\Sigma\vdash_\mathfrak{R}\varphi$ denote the \emph{inference} of $\varphi$ from $\Sigma$ by $\mathfrak{R}$. That is, there is some sequence $\sigma_1,\ldots,\sigma_n$ such that $\sigma_n=\varphi$ and every $\sigma_i$ is an element of $\Sigma$ or is the conclusion that results from an application of an inference rule in $\mathfrak{R}$ to some premises in $\{\sigma_1,\ldots,\sigma_{i-1}\}$. Let $\Sigma^+_\mathfrak{R}=\{\varphi\,\mid\,\Sigma\vdash_{\mathfrak{R}}\varphi\}$ be the \emph{syntactic closure} of $\Sigma$ under inferences by $\mathfrak{R}$. $\mathfrak{R}$ is \emph{sound} (\emph{complete}) if for every set $\Sigma$ over every $R$ we have $\Sigma^+_\mathfrak{R}\subseteq\Sigma^\ast$ ($\Sigma^\ast\subseteq\Sigma^+_\mathfrak{R}$). The (finite) set $\mathfrak{R}$ is a (finite) \emph{axiomatization} if $\mathfrak{R}$ is both sound and complete.

Table~\ref{tab} shows a finite axiomatization $\mathfrak{A}$ for key sets. A non-trivial rule is \textbf{Composition} which is illustrated by our running example.

\begin{example} Recall Example~\ref{ex:intro} from the introduction, in particular $\Sigma=\{\mathcal{X}_1,\mathcal{X}_2\}$ and $\varphi=\mathcal{X}$. It turns out that $\varphi$ is indeed implied by $\Sigma$, since $\varphi$ can be inferred from $\Sigma$ by an application of the Composition rule, and the rule is sound for the implication of key sets. Indeed, $\mathcal{X}_1\times\mathcal{X}_2$ consists of:
\begin{center}
(\{\textit{room},\textit{time}\}, \{\textit{name},\textit{time}\}), \\
(\{\textit{room},\textit{time}\}, \{\textit{injury},\textit{time}\}), \\
(\{\textit{injury},\textit{time}\}, \{\textit{name},\textit{time}\}), and \\
(\{\textit{injury},\textit{time}\}, \{\textit{injury},\textit{time}\})\;.
\end{center}
and for each element $X=(X_1,X_2)$ we need to pick one attribute set $Z_X$ that is contained in the union $X_1\cup X_2$ and contains either $X_1$ or $X_2$. For the first element we pick $\{\textit{room},\textit{time},\textit{name}\}$, and for the remaining three elements we pick $\{\textit{injury},\textit{time}\}$. That results in the key set $\mathcal{X}$.
\end{example}

We now proceed with the completeness proof for the axiom system $\A$ of Table \ref{tab}. The proof proceeds in three stages. First in Lemma~\ref{lemA}, we show a characterization of the implication problem. This is applied in Lemma~\ref{lemB} to show that $\A$ extended with  $n$-ary Composition for all $n\in \N$ is complete (see Table~\ref{tab2}). At last, we show in Lemma~\ref{lemC} that $n$-ary Composition  can be simulated with the binary Composition of  $\A$.

\begin{table}[h]
\caption{The $n$-ary Composition rule\label{tab2}}
\[\fbox{$\begin{array}{c@{\hspace*{1cm}}c@{\hspace*{1cm}}c}
 &  \cfrac{\calX_1\quad \ldots\quad \calX_n}{\{Z_{\tuple X}\mid \tuple X \in \calX_1\times \ldots \times \calX_n\}} & \\
&Z_{\tuple X}\sub \bigcup \tuple X\text{ and }\bigvee_i X_i \sub Z_{\tuple X} & \\
\end{array}$}\]
\end{table}

\begin{lemma}\label{lemA}
 $\{\calX_1, \ldots ,\calX_n\}\models \calY$  iff for all $(X_1, \ldots ,X_n)\in \calX_1\times \ldots \times \calX_n$ there is $\calZ\sub \calY$ such that $\bigcup \calZ \sub \bigcup_i X_i $, and $X_i \sub \bigcup \calZ$ for some $i$.
\end{lemma}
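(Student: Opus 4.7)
The plan is to split into two directions, with the left-to-right direction (necessity) requiring the only real work. I will actually prove the contrapositive of the left-to-right direction by an explicit two-tuple counterexample construction.

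\textbf{Sufficiency (``if'')}. Assume the combinatorial condition on the right. Let $r$ be a relation satisfying $\calX_1,\ldots,\calX_n$, and fix two distinct tuples $t,t'\in r$. For each $i$, by satisfaction of $\calX_i$ pick an $X_i\in\calX_i$ such that $t,t'$ are $X_i$-total and $t(X_i)\ne t'(X_i)$. Apply the hypothesis to the tuple $(X_1,\ldots,X_n)$ to obtain some $\calZ\subseteq\calY$ with $\bigcup\calZ\subseteq\bigcup_i X_i$ and $X_{i_0}\subseteq\bigcup\calZ$ for some $i_0$. Since $t,t'$ disagree on $X_{i_0}$, they disagree on some attribute $A\in X_{i_0}\subseteq\bigcup\calZ$, so $A$ lies in some $Y\in\calZ\subseteq\calY$. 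Every attribute of this $Y$ lies in $\bigcup\calZ\subseteq\bigcup_iX_i$, hence $t,t'$ are $Y$-total; and they disagree on $A\in Y$, hence $t(Y)\ne t'(Y)$. So $r$ satisfies $\calY$.

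\textbf{Necessity (``only if'')}. I prove the contrapositive: if the right-hand side fails, exhibit a two-tuple counterexample. Let $(X_1,\ldots,X_n)\in\calX_1\times\cdots\times\calX_n$ be a witness of failure, and define the candidate subset
\[
\calZ^\ast\ =\ \{Y\in\calY\mid Y\subseteq\textstyle\bigcup_i X_i\}\;.
\]
Then $\bigcup\calZ^\ast\subseteq\bigcup_i X_i$, so by the failure of the condition applied to $\calZ^\ast$, we must have $X_i\not\subseteq\bigcup\calZ^\ast$ for every $i$. Pick $A_i\in X_i\setminus\bigcup\calZ^\ast$ for each $i$. Now define $r=\{t,t'\}$ over $R$ by: $t(A)=t'(A)=\bot$ if $A\notin\bigcup_i X_i$; $t(A)=t'(A)=0$ if $A\in\bigcup\calZ^\ast$; and $t(A)=0$, $t'(A)=1$ if $A\in\bigcup_i X_i\setminus\bigcup\calZ^\ast$.

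\textbf{Verification.} For each $i$, $X_i\subseteq\bigcup_i X_i$ gives $X_i$-totality of both tuples, and $A_i\in X_i\setminus\bigcup\calZ^\ast$ gives $t(A_i)\ne t'(A_i)$, so $r$ satisfies $\calX_i$. For any $Y\in\calY$ either $Y\not\subseteq\bigcup_i X_i$, in which case some attribute of $Y$ is $\bot$ in both tuples and $Y$-totality fails; or $Y\subseteq\bigcup_i X_i$, placing $Y$ in $\calZ^\ast$, so all attributes of $Y$ lie in $\bigcup\calZ^\ast$ and $t(Y)=t'(Y)$. Either way $\{t,t'\}$ does not witness $Y$, so $r$ violates $\calY$.

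\textbf{Main obstacle.} The delicate point is choosing the right counterexample. A ``disagree everywhere in $\bigcup_i X_i$'' construction is too destructive, since any $Y\in\calY$ contained in $\bigcup_i X_i$ would then still separate $t$ from $t'$. The key idea is to force agreement on $\bigcup\calZ^\ast$ (the ``dangerous'' region where $\calY$-keys live inside $\bigcup_i X_i$) and allow disagreement only on the residual attributes of each $X_i$, whose existence is guaranteed precisely by the failure of the condition. Once this construction is identified, both satisfaction and non-satisfaction follow from the definitions.
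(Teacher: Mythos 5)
Your proof is correct and follows essentially the same route as the paper: sufficiency directly from the definition of key-set satisfaction, and necessity via a two-tuple counterexample built from the maximal candidate $\calZ^\ast=\{Y\in\calY\mid Y\subseteq\bigcup_i X_i\}$ with agreement forced on $\bigcup\calZ^\ast$ and nulls outside $\bigcup_i X_i$. The only cosmetic difference is that the paper keeps one tuple total and lets the second tuple disagree only on chosen attributes $A_i\in X_i\setminus\bigcup\calZ^\ast$, whereas you let the tuples disagree on all of $\bigcup_i X_i\setminus\bigcup\calZ^\ast$ and put $\bot$ in both tuples outside $\bigcup_i X_i$; both variants verify identically.
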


\begin{proof}
Assume first that one finds such an $\calZ$. We show that any relation $r$ that satisfies each $\calX_i$ satisfies also $\calY$. Let $t,t'$ be two tuples from $r$. Then for some $(X_1, \ldots ,X_n)\in \calX_1\times \ldots \times \calX_n$, $t$ and $t'$ are both $\bigcup_i X_i$-total and disagreeing on each $X_i$. Assume that $i$ is such that $X_i\sub \bigcup \calZ$, and let $A\in X_i$ be such that $t(A)\neq t'(A)$. Then selecting some $Z\in \calZ$ such that it also contains $A$, we have that $t$ and $t'$ are $Z$-total and deviate on $Z$. Thus $Z$ is witness for $r\models \calY$.

For the other direction we assume that no such $\calZ$ exists. Then there is
$(X_1, \ldots ,X_n)\in \calX_1\times \ldots \times \calX_n$ such that for $\calZ:=\{Z\in \calY\mid Z\sub \bigcup_i X_i\}$, $X_i\not\sub \bigcup \calZ $ for all $i$. Then, selecting an attribute $A_i$ from $X_i\setminus \bigcup \calZ$ for all $i$, we may construct a relation $r$ satisfying $\{\calX_1, \ldots ,\calX_n,\neg  \calY\}$. This relation $r$ consists of two tuples $t,t'$ where $t$ is a constant function mapping all of $R$ to $0$, and $t'$ maps $\bigcup_i A_i$ to $1$, $\bigcup_i X_i \setminus \bigcup_i A_i$ to $0$, and all the remaining attributes to $\bot$. Now, obviously $r$ satisfies all $\calX_i$. Furthermore, for $Y\in \calY\setminus \calZ$, $t'$ is not $Y$-total, and for $Y\in \calY\cap \calZ$ both $t$ and $t'$ are $Y$-total but with constant values $0$. Therefore, $r$ is a witness of $\{\calX_1, \ldots ,\calX_n\}\not\models \calY$ which concludes the proof.
\end{proof}
Notice that the latter condition of Lemma \ref{lemA} can be equivalently stated as $X_i\sub \bigcup\{Y\in \calY\mid Y\sub \bigcup_i X_i\}$ for some $i$.

\begin{lemma}\label{lemB}
The axiomatization $\A$ extended with $n$-ary Composition is complete for key sets.
\end{lemma}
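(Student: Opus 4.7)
The plan is to use Lemma~\ref{lemA} as the blueprint for a syntactic derivation in $\A$ extended with $n$-ary Composition. Suppose $\Sigma = \{\calX_1,\ldots,\calX_n\} \models \calY$. Then Lemma~\ref{lemA} supplies, for every tuple $\tuple X = (X_1,\ldots,X_n) \in \calX_1 \times \cdots \times \calX_n$, a subset $\calZ_{\tuple X} \sub \calY$ such that $\bigcup \calZ_{\tuple X} \sub \bigcup_i X_i$ and $X_i \sub \bigcup \calZ_{\tuple X}$ for some index $i$. I would fix one such witness $\calZ_{\tuple X}$ per tuple.

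Setting $Z_{\tuple X} := \bigcup \calZ_{\tuple X}$, I would first verify that these sets satisfy the side conditions of the $n$-ary Composition rule of Table~\ref{tab2}: both $Z_{\tuple X} \sub \bigcup \tuple X$ and $X_i \sub Z_{\tuple X}$ for some $i$ are immediate from the properties of $\calZ_{\tuple X}$ given by Lemma~\ref{lemA}. A single application of $n$-ary Composition to $\calX_1, \ldots, \calX_n$ with this choice therefore derives the intermediate key set
\[
\mathcal{W} := \{\,Z_{\tuple X} \mid \tuple X \in \calX_1 \times \cdots \times \calX_n\,\}.
\]

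To convert $\mathcal{W}$ into $\calY$, I would apply Refinement repeatedly. By construction, each $Z_{\tuple X}$ is the union of the members of $\calZ_{\tuple X} \sub \calY$, so $|\calZ_{\tuple X}|-1$ successive applications of Refinement split the single key $Z_{\tuple X}$ into the individual elements of $\calZ_{\tuple X}$. Performing this splitting for every $\tuple X$ yields the key set $\bigcup_{\tuple X} \calZ_{\tuple X}$, which is itself a subset of $\calY$. A final application of Upward closure then adjoins the (possibly empty) remainder $\calY \setminus \bigcup_{\tuple X} \calZ_{\tuple X}$ and completes the derivation $\Sigma \vdash \calY$.

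The main obstacle is bookkeeping rather than a conceptual leap: one must confirm that the choice $Z_{\tuple X} = \bigcup \calZ_{\tuple X}$ always qualifies as an admissible witness for $n$-ary Composition, and dispense with boundary cases such as $\calZ_{\tuple X}$ being a singleton (no Refinement required) or containing a tuple with some $X_i = \emptyset$ (harmless, since the second condition of Lemma~\ref{lemA} is then vacuously satisfied with $\calZ_{\tuple X} = \emptyset$, and Upward closure supplies the needed keys at the end). Conceptually, Lemma~\ref{lemA} has already pinpointed, for each $\tuple X$, precisely the subset of $\calY$ that certifies the semantic implication; the derivation merely reads these subsets off and reassembles them using Refinement and Upward closure.
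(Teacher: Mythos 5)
Your proposal is correct and follows essentially the same route as the paper's proof: invoke Lemma~\ref{lemA} to pick the witnesses $\calZ_{\tuple X}\sub\calY$, apply $n$-ary Composition once with $Z_{\tuple X}=\bigcup\calZ_{\tuple X}$, split these unions back into the elements of the $\calZ_{\tuple X}$ by repeated Refinement, and finish with one application of Upward closure. The extra bookkeeping remarks (singleton or empty witnesses) are harmless but not needed beyond what the paper already covers.
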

\begin{proof}
Assume $\{\calX_1, \ldots ,\calX_n\}\models \calY$. Then we obtain by Lemma \ref{lemA}  for all $\tuple X=(X_1, \ldots ,X_n)  \in \calX_1\times \ldots \times \calX_n$ a subset $\calZ_{\tuple X}\sub \calY$ such that $\bigcup \calZ_{\tuple X} \sub \bigcup \tuple X$, and $X_i \sub \bigcup \calZ_{\tuple X}$ for some $i$.
Then by Composition we may derive $\{\bigcup \calZ_{\tuple X}\mid \tuple X\in \calX_1\times \ldots \times \calX_n\}$. With repeated applications of Refinement we then derive 
 $\bigcup\{\calZ_{\tuple X}\mid \tuple X \in \calX_1\times \ldots \times \calX_n\}$. Since this set is a subset of $\calY$, we finally obtain $\calY$ with a single application of Upward closure.
\end{proof}

\begin{lemma}\label{lemC}
$n$-ary Composition is derivable in $\A$.
\end{lemma}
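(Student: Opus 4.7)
The plan is to proceed by induction on $n$. The base case $n=2$ is immediate since binary Composition is an axiom of $\A$. For the inductive step, assuming $(n-1)$-ary Composition is derivable in $\A$, I would simulate an instance of $n$-ary Composition from $\calX_1,\ldots,\calX_n$ to an arbitrary valid target $\calZ=\{Z_{\vec X}\mid \vec X\in\calX_1\times\cdots\times\calX_n\}$ in two stages.

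First, I would apply binary Composition to $\calX_{n-1}$ and $\calX_n$ to obtain an intermediate key set $\calY=\{Y_{(X_{n-1},X_n)}\mid (X_{n-1},X_n)\in\calX_{n-1}\times\calX_n\}$, where each $Y_{(X_{n-1},X_n)}$ is a carefully chosen subset of $X_{n-1}\cup X_n$ containing $X_{n-1}$ or $X_n$. Second, I would invoke the (derivable by IH) $(n-1)$-ary Composition rule on $\calX_1,\ldots,\calX_{n-2},\calY$, under the natural identification $\vec X\leftrightarrow(X_1,\ldots,X_{n-2},Y_{(X_{n-1},X_n)})$, setting each target element to $Z_{\vec X}$. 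For each $\vec X$, the $(n-1)$-ary validity requires $Z_{\vec X}\subseteq X_1\cup\cdots\cup X_{n-2}\cup Y_{(X_{n-1},X_n)}$ together with either $X_j\subseteq Z_{\vec X}$ for some $j\leq n-2$ (which holds automatically when the witness index $i(\vec X)\leq n-2$) or $Y_{(X_{n-1},X_n)}\subseteq Z_{\vec X}$ (required when $i(\vec X)\in\{n-1,n\}$).

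The hard part will be choosing each $Y_{(X_{n-1},X_n)}$ so that all $\vec X$ sharing that pair are simultaneously accommodated: the lower bound on $Y_{(X_{n-1},X_n)}$ forced by the first inclusion for one $\vec X$ may conflict with the upper bound $Z_{\vec X'}$ imposed by another $\vec X'$ with $i(\vec X')\in\{n-1,n\}$. To resolve such conflicts, I would partition the tuples in $\calX_1\times\cdots\times\calX_n$ according to their witness indices and perform the reduction piecewise on each subfamily, then consolidate the partial derivations via Upward closure while using Refinement to split any over-coarse intermediate element into the finer pieces needed so that the final derived key set coincides with $\calZ$.
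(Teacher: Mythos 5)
Your reduction has a genuine gap at exactly the point you flag, and the repair you sketch does not close it. The first-stage obstruction is real, not just a bookkeeping nuisance: take $n=3$ with pairwise disjoint keys, $\calX_1=\{X_1,X_1'\}$, $\calX_2=\{X_2\}$, $\calX_3=\{X_3\}$, and targets $Z_{(X_1,X_2,X_3)}=X_2$, $Z_{(X_1',X_2,X_3)}=X_3$ (both legal for ternary Composition). Any $Y$ produced by binary Composition from $\calX_2,\calX_3$ must contain $X_2$ or $X_3$, while accommodating the two tuples simultaneously would force $Y\sub X_2$ and $Y\sub X_3$ (and, via $Z\sub X_1\cup Y$ with disjointness, even $X_2\sub Y$ and $X_3\sub Y$); no single intermediate $\calY$ exists. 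Your proposed fix --- partition the product tuples by witness index, reduce piecewise, then consolidate with Upward closure and Refinement --- founders on how the calculus works: no rule of $\A$ derives a conclusion ``for a subfamily'' of the product, since every application of Composition must output an element for \emph{every} tuple of the relevant product, and elements of a derived key set can never be deleted. Upward closure only enlarges one already-derived key set (it cannot merge or prune several partial derivations), and Refinement can only replace an element $XY$ by pieces $X,Y$ whose union is that element. So the ``junk'' elements you are forced to emit for tuples outside the chosen subfamily persist unless each happens to be a union of elements of the target key set, which is not guaranteed (in the example above an element such as $X_1'\cup X_2$ is not a union of the targets $X_2,X_3$). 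As written, the consolidation step is unsound and the induction on $n$ does not go through.

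For comparison, the paper does not induct on $n$ at all. It first composes $\calX_1,\ldots,\calX_n$ into $\{\bigcup\tuple X\mid \tuple X\in\calX_1\times\ldots\times\calX_n\}$ and then repeatedly cycles binary Composition through $\calX_1,\ldots,\calX_n$, keeping elements that already lie in the target $\calK$ fixed and proving that every other element strictly increases its ``maximal decomposition size'' in each round; after at most $|\calX_1\cup\ldots\cup\calX_n|+1$ rounds everything lies in $\calK$, and one application of Upward closure finishes. Some device of this kind --- iterated re-composition against the $\calX_i$ with a termination measure, rather than a one-shot choice of an intermediate $\calY$ --- is what your argument is missing.
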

\begin{proof}
Assume that $\calK= \{Z_{\tuple X}\mid \tuple X \in \calX_1\times \ldots \times \calX_n\}$ is obtained from $\calX_1,\ldots ,\calX_n$ by an application of $n$-ary Composition.
We will perform consecutive applications of (binary) Composition until we have obtained $\calK$. 
 Composition is applied incrementally so that the first application of this rule combines $\calX_1$ and $\calX_2$ to obtain a new key set $\calX$, the second combines $\calX$ and $\calX_{3}$ to obtain the next key set $\calX'$, the third $\calX'$ and $\calX_4$ to obtain $\calX''$, and so forth. Once   $\calX_n$ is reached the cycle is started again from  $\calX_1$.

At each step of the aforementioned procedure we have deduced a key set $\calX$ such that each $X\in \calX$ either is a union $\bigcup  \calY_1 \cup \ldots \cup \bigcup  \calY_n$ for  $\calY_i \sub \calX_i$, or belongs to  the required key set $\calK$. In the previous case, provided that each $\calY_i$ is the maximal subset of $\calX_i$ such that $\bigcup \calY_i \sub X$,  we refer to $\calY_1 \cup \ldots \cup \calY_n$ as the \emph{maximal decomposition} of $X$ and $|\calY_1\cup\ldots \cup \calY_n|$ as the \emph{decomposition size} of $X$. Furthermore, given a set $Z_{\tuple X}\in \calK$ where $\tuple X\in \calX_1\times \ldots \times \calX_n$ we say that a set $X_i\in \tuple X$ is \emph{full} in $Z_{\tuple X}$ if $X_i\sub Z_{\tuple X}$. By the prerequisite of the $n$-ary Composition some member of $\tuple X$ is always guaranteed to be full in $Z_{\tuple X}$.

\noindent
\textbf{Initialization.} Consider an instance of $n$-ary Composition. 
We initialize the procedure by applying Composition $n-1$ many times so that we obtain the key set $\{\bigcup \tuple X\mid \tuple X \in \calX_1\times \ldots \times \calX_n\}$. This is done by letting $\calU_1:=\calX_1$ and taking the key set $\calU_{i+1}=\{X_1\cup X_2\mid (X_1,X_2) \in \calU_i\times \calX_{i+1}\}$ for $i=1, \ldots ,n-1$.

\noindent
\textbf{Inductive step.} After the initial step we have reached a key set $\calV_1:=\calU_{n}$ such that all $X\in \calV_1\setminus \calK$ have decomposition size at least $1$. Assume now that we have reached a key set $\calV_{m}$ such that all $X\in \calV_m\setminus \calK$ have decomposition size at least $m$. As the induction step we show how to obtain a key set $\calV_{m+1}$ such that every member of  $\calV_{m+1}\setminus \calK$ has decomposition size at least $m+1$. This is done by taking a single round of applications of Composition to $\calV_m$ and $\calX_1, \ldots ,\calX_n$. That is, $\calV_m$ and $\calX_1$ are first combined using  Composition, then the outcome is combined with $\calX_2$, and its outcome with $\calX_3$, and so forth until we have applied this procedure to $\calX_n$. All these applications keep the members of $V_m \cap \calK$ fixed. For instance, at the first step $Z_{(X,Y)}$ for $X\in \calV_m\cap \calK$ and any $Y\in \calX_1$ is defined as $X$. We show how this deduction handles an arbitrary $X\in \calV_m\setminus \calK$.

By induction assumption each $X\in \calV_m\setminus \calK$ has decomposition size at least $m$. Let  $\bigcup  \calY_1  \cup \ldots \cup \bigcup \calY_n$ be the maximal decomposition of $X$. Now, assume towards a contradiction that for each $i$ there is $Y_i\in \calY_i$ such that $Y_i$ is not full in any $Z_{\tuple Y}\in \calK$ where $\tuple Y\in \calY_1\times \ldots \times  \calY_n$ and $Y_i$ is the $i$th member of $\tuple Y$. Then, however,  the diagonal  $\tuple Y'=(Y_1, \ldots ,Y_n)$ must have a member that is full in $Z_{\tuple Y'}$. This is a contradiction and hence there is $i$ such that all $Y_i\in \calY_i$ are full in some $Z_{\tuple Y}\in \calK$ where $\tuple Y\in \calY_1\times \ldots \times \calY_n$ and $Y_i$ is the $i$th member of $\tuple Y$. With regards to $X$, Composition is then applied as follows. For the first $i-1$ applications $X$ is kept fixed. For the $i$th application that considers $\calX_i$,  each pair of $X$ and $Y\in \calY_i$ is transformed to that $Z_{\tuple Y}\in \calK$ in which $Y$ is full. Furthermore, each pair of $X$ and $Y\in \calX_i\setminus \calY_i$ is transformed to $XY$. Take note that the decomposition size of $XY$ is at least $n+1$. At last, the remaining applications of Composition keep the obtained sets fixed. Since this procedure is applied to all $X\in \calV_m\setminus \calK$, we obtain that $\calV_{m+1}\setminus \calK$ has only sets with decomposition size at least $m+1$. This concludes the induction step.

Now, $\calV_{M+1}$ where $M=|\calX_1 \cup \ldots \cup \calX_n|$ is a subset of $\calK$. Hence, we conclude that   $\calV_{M+1}$ yields $\cal{K}$ with one application of Upward closure.
\end{proof}
Note that a simulation of one application of $n$-ary of Composition to $\{\calX_1, \ldots ,\calX_n\}$ takes at most $(n+1)\cdot |\bigcup_{i=1}^n \calX_i|$ applications of binary Composition plus one application of Upward Closure.

The previous three lemmata now generate the following axiomatic characterization of key set implication. We omit the soundness proof which is straightforward to check.
\begin{theorem}
The axiomatization $\A$ is sound and complete for key sets.
\end{theorem}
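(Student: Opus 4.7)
The plan is to combine the three preceding lemmas with a direct soundness check, rather than to introduce any new combinatorial argument. The heavy lifting has already been done: Lemma~\ref{lemA} supplies the semantic characterization of implication, Lemma~\ref{lemB} gives completeness relative to the system enriched with $n$-ary Composition, and Lemma~\ref{lemC} shows how to eliminate $n$-ary Composition in favor of the binary rule of $\A$. So the theorem should emerge simply by assembling these pieces.

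First I would verify soundness of $\A$ by checking each rule in isolation. Upward closure is immediate: any $X\in\calX$ that separates a tuple pair $t,t'$ in a relation $r$ still separates them when $\calX$ is enlarged to $\calX\cup\calY$. Refinement is essentially a matter of unfolding the definition: if $t,t'$ are both $XY$-total and $t(XY)\neq t'(XY)$, then they deviate on some attribute lying in $X$ or in $Y$, and are thus separated by the corresponding element of $\{X,Y\}$. For Composition I would appeal to the $n=2$ case of the forward direction of Lemma~\ref{lemA}: given any tuples $t,t'$ in a relation satisfying both $\calX_1$ and $\calX_2$, pick witnessing pairs $X_1\in\calX_1$ and $X_2\in\calX_2$, and note that $Z_{(X_1,X_2)}\subseteq X_1\cup X_2$ ensures totality while $X_i\subseteq Z_{(X_1,X_2)}$ for some $i$ propagates the disagreement on $X_i$ to $Z_{(X_1,X_2)}$.

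For completeness, assume $\Sigma\models\calY$ with $\Sigma=\{\calX_1,\ldots,\calX_n\}$. Lemma~\ref{lemB} supplies a derivation of $\calY$ from $\Sigma$ in $\A$ enlarged with $n$-ary Composition. Each use of $n$-ary Composition in that derivation can then be expanded, by the procedure of Lemma~\ref{lemC}, into a finite sequence of applications of the binary Composition, Refinement, and Upward closure rules of $\A$. Splicing these expansions into the derivation produced by Lemma~\ref{lemB} yields a derivation of $\calY$ purely in $\A$, which is exactly $\Sigma\vdash_{\A}\calY$.

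The step I expect to require the most care is ensuring that the splicing is well-defined: intermediate key sets obtained from earlier $n$-ary Composition steps may be consumed as premises later in the derivation, so the simulating subderivations must deliver precisely those key sets and no others. This is guaranteed by the statement of Lemma~\ref{lemC}, which derives the exact conclusion of each $n$-ary instance; hence no rewriting of subsequent steps is needed, and the simulations can be concatenated in the obvious order. With this observation the proof reduces to a one-line appeal to the three lemmas, and the theorem follows.
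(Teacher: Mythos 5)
Your proof is correct and follows the paper's own route exactly: the paper derives the theorem by citing Lemmas~\ref{lemA}--\ref{lemC} (completeness via the $n$-ary system, then elimination of $n$-ary Composition) and explicitly omits the soundness check as straightforward, which you have simply written out rule by rule. No gaps; the splicing concern you raise is handled, as you note, by Lemma~\ref{lemC} producing the exact conclusion of each $n$-ary instance (and in fact the derivation from Lemma~\ref{lemB} uses only one such instance).
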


\noindent
\textbf{Another important application.} A direct application of an axiomatization is the efficient representation of collections of key sets. Similar to the computation of non-redundant covers during update operations, removing any redundant constraints makes the result easier to understand by humans. This is, for example, important for the
discovery problem of key sets in which one attempts to efficiently represent all those key sets that a given relation satisfies. Even more directly, one can understand any
sound inference rule as an opportunity to apply pruning techniques as part of a discovery algorithm. A complete axiomatization ensures all opportunities for the pruning of a search space can be exploited.

\section{Complexity of Key Set Implication}\label{s:complexity}

In this section we settle the exact computational complexity of the implication problem for key sets. While the implication problem for most notions of keys over incomplete relations is decidable in linear time, the implication problem for key sets is likely to be intractable. This should also be seen as evidence for the expressivity of key sets.

\begin{theorem}
The implication problem for key sets is coNP-complete.
\end{theorem}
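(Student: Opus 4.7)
My plan is to establish the theorem in two directions: membership in coNP and coNP-hardness.

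For the upper bound, I would invoke Lemma~\ref{lemA}, which characterises non-implication: $\{\calX_1,\ldots,\calX_n\}\not\models\calY$ holds precisely when there is a tuple $(X_1,\ldots,X_n)\in\calX_1\times\ldots\times\calX_n$ such that, writing $\calZ=\{Y\in\calY:Y\sub\bigcup_i X_i\}$, no $X_i$ is contained in $\bigcup\calZ$. Such a tuple has polynomial size, and both subset conditions can be verified in polynomial time, so non-implication is in NP and implication is in coNP.

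For the lower bound I would reduce from the coNP-complete problem of unsatisfiability of $3$-CNF formulas. Given $\phi=C_1\wedge\ldots\wedge C_m$ on variables $x_1,\ldots,x_n$, I introduce attributes $a_i^+$ and $a_i^-$ for every $i$, let $\calX_i=\{\{a_i^+\},\{a_i^-\}\}$, and set $\Sigma=\{\calX_1,\ldots,\calX_n\}$. For each clause $C_j=\ell_{j,1}\vee\ell_{j,2}\vee\ell_{j,3}$ I add to $\calY$ the \emph{falsifying} key $Y_j$ that contains $a_i^-$ whenever $\ell_{j,k}=x_i$ and $a_i^+$ whenever $\ell_{j,k}=\neg x_i$. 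Picking one element of each $\calX_i$ thus commits to a truth value for $x_i$; the construction is clearly polynomial-time in $|\phi|$.

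I would then verify correctness via Lemma~\ref{lemA}. Any choice $(X_1,\ldots,X_n)$ determines an assignment $\sigma$ through $T=\bigcup_i X_i$; by construction $Y_j\sub T$ holds precisely when every literal of $C_j$ is false under $\sigma$, i.e.\ when $\sigma$ violates $C_j$. Moreover, whenever $Y_j\sub T$, each of the three attributes in $Y_j$ equals the singleton already chosen in $\calX_i$ for the corresponding variable, so $X_i\sub Y_j\sub\bigcup\calZ$ for all three of those $i$. Consequently the failure condition ``no $X_i$ lies in $\bigcup\calZ$'' is equivalent to $\calZ=\emptyset$, that is, to $\sigma$ satisfying every clause of $\phi$. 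Hence $\Sigma\not\models\calY$ iff $\phi$ is satisfiable, which is the desired reduction.

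The main subtlety I anticipate is getting the polarity of the encoding right: the deliberate flip in the definition of $Y_j$ is what makes $Y_j\sub T$ mean ``$C_j$ is violated by $\sigma$'', and so aligns the existential quantifier in Lemma~\ref{lemA} with the existential quantifier in SAT. A secondary concern is confirming that the singleton shape of the keys in each $\calX_i$ is exploited correctly in the converse direction, so that a witness $(X_1,\ldots,X_n)$ unambiguously recovers an assignment; once this is observed, both directions of correctness follow by straightforward unpacking.
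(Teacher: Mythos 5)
Your proof is correct and follows essentially the same route as the paper: coNP membership via the non-implication characterization of Lemma~\ref{lemA} (guess one $X_i$ per $\calX_i$, then check the maximal $\calZ$ deterministically), and hardness by reducing from the complement of \threeSAT using one two-element unary key set per variable and one key per clause. The only cosmetic difference is the polarity convention: the paper puts the clause literal sets themselves into $\calY$ and reads the chosen attribute as the literal made \emph{false}, whereas you put the complemented (falsifying) keys into $\calY$ and read the chosen attribute as the literal made \emph{true} --- the same reduction up to renaming attributes.
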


\begin{proof}
Consider first the membership in $\coNP$. By Lemma \ref{lemA}, for determining whether $\{\calX_1, \ldots ,\calX_n\}\not\models \calY$, it suffices to choose $X_1, \ldots ,X_n$ respectively from $\calX_1, \ldots ,\calX_n$, and then deterministically check that  $X_i \not\sub \bigcup \calZ$ for all $i$, where $\calZ$ is selected deterministically as $\calZ:=\{Z\in \calY\mid Z\sub \bigcup_i X_i\}$.

\noindent
For the hardness, we reduce from the complement of \threeSAT. Let $C_1, \ldots ,C_n$ be a collection of clauses, each consisting of three literals, i.e., propositions of the form $p$ or negated propositions of the form $\neg p$. Let $P$ be the set of all proposition symbols that appear in some $C_i$, and let $\overbar{P}$ consist of their negations. Letting $P\cup \overbar{P}$ be our relation schema, we show that $\bigwedge_i \bigvee C_i$ has a solution iff $\{\{p,\neg p\}\mid p\in P\} \not\models \{C_1, \ldots ,C_n\}$. Notice that the antecedent is a set of singleton key sets, each of size two.

Assume first that there is a solution. Let $S\sub \calP(P)$ encode the complement of that solution, i.e., $S$ is such that each $C_i$ contains some $p\notin S$ or some $\neg p$ for $p\in S$. Let $\overbar{S}=\{\neg p \mid p\not\in S\}$, and define singleton sets $X_p=\{p,\neg p\} \cap (S\cup\overbar{S})$, encoding those literals that are set false by the solution. Then $C_i\not\sub \bigcup_p X_p$ for all $i$, implying be Lemma \ref{lemA} that $\{\{p,\neg p\}\mid p\in P\} \not\models \{C_1, \ldots ,C_n\}$.

Assume then that $\{\{p,\neg p\}\mid p\in P\} \not\models \{C_1, \ldots ,C_n\}$. By Lemma \ref{lemA} we find $X_p\in \{p,\neg p\}$ such that for no $\calZ \sub \{C_1, \ldots ,C_n\}$ we have that
$\bigcup \calZ \sub \bigcup_p X_p$ and $\bigvee_pX_p\sub \bigcup \calZ$. Now, $C_i\sub \bigcup_p X_p$ implies $X_p\sub C_i$ for three distinct $p$, and therefore we must have $C_i\not\sub \bigcup_p X_p$ for all $i$. It is now easy to see that the sets $X_p$ give rise to a solution to the satisfiability problem.
\end{proof}

\section{Logical Characterization of the Implication Problem}\label{s:logical}

\section{Armstrong Relations}\label{s:Armstrong}

In this section we ask the basic question whether key sets enjoy Armstrong relations. These are special models which are perfect for a given collection of key sets. More formally, a given relation $r$ is said to be \emph{Armstrong} for a given set $\Sigma$ of key sets if and only if for all key sets $\varphi$ it is true that $r$ satisfies $\varphi$ if and only if $\Sigma$ implies $\varphi$. Indeed, an Armstrong relation is a perfect model for $\Sigma$ since it satisfies all keys sets implied by $\Sigma$ and does not satisfy any key set that is not implied by $\Sigma$. Armstrong relations have important applications in data profiling \cite{DBLP:journals/vldb/AbedjanGN15} and the requirements acquisition phase of database design \cite{DBLP:journals/is/LangeveldtL10}.

Unfortunately, arbitrary sets of key sets do not enjoy Armstrong relations as the following result manifests.

\begin{theorem}
There are sets of key sets for which no Armstrong relations exist.
\end{theorem}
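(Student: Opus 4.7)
The plan is to exhibit a set $\Sigma$ of key sets together with two key sets $\varphi_1,\varphi_2$ such that $\Sigma\not\models \varphi_1$ and $\Sigma\not\models \varphi_2$, yet every relation satisfying $\Sigma$ satisfies at least one of $\varphi_1,\varphi_2$. If I can do this, then no single relation can simultaneously witness both non-implications, and hence no Armstrong relation for $\Sigma$ can exist.

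I would work over the schema $R=\{A,B\}$, take $\Sigma=\{\calX_0\}$ with $\calX_0=\{\{A\},\{B\}\}$, and choose the target key sets $\varphi_1=\{\{A\},\{A,B\}\}$ and $\varphi_2=\{\{B\},\{A,B\}\}$. The non-implications should be witnessed by the two-tuple relations $r_1=\{(1,1),(\bot,2)\}$ and $r_2=\{(1,1),(2,\bot)\}$ respectively: in $r_1$ the sole pair is separated by $\{B\}$, so $\calX_0$ holds; but the same pair is separated neither by $\{A\}$ nor by $\{A,B\}$ because the second tuple has $\bot$ on $A$, so $\varphi_1$ fails. The argument for $r_2$ and $\varphi_2$ is symmetric.

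The heart of the proof will be to show that no single relation $r$ satisfies $\calX_0$ and violates both $\varphi_1$ and $\varphi_2$. Suppose otherwise, with pairs $(t_1,t_1')$ and $(t_2,t_2')$ of distinct tuples of $r$ witnessing the two violations. For $(t_1,t_1')$, I would argue as follows: since $\calX_0$ holds on the pair while neither $\{A\}$ nor $\{A,B\}$ separates it, the case of matching $A$-values can be excluded, because non-separation by $\{A,B\}$ would then force a $\bot$ on $B$, leaving neither $\{A\}$ nor $\{B\}$ able to separate the pair; hence some $\bot$ must occur on $A$ in the pair, and $\calX_0$ then forces $\{B\}$ to separate, making both $t_1,t_1'$ be $B$-total with distinct $B$-values. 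A symmetric analysis shows that some tuple in $(t_2,t_2')$ carries $\bot$ on $B$, and both tuples of that pair are $A$-total.

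To finish, I would rename so that $t_1$ has $\bot$ on $A$ and $t_2$ has $\bot$ on $B$. Then $t_1$ is $B$-total while $t_2$ is not, so $t_1\neq t_2$. The pair $(t_1,t_2)$ therefore consists of distinct tuples of $r$, is neither $A$-total nor $B$-total, and hence is separated by neither $\{A\}$ nor $\{B\}$, contradicting $r\models \calX_0$. The main difficulty is conceptual: one must locate a $\Sigma$ whose disjunctive satisfaction is strong enough to force joint violations of $\varphi_1,\varphi_2$ to clash on a third pair, while being weak enough to leave both $\varphi_1,\varphi_2$ individually unimplied. The $\bot$-occurrences on disjoint attributes in the tiny schema $\{A,B\}$ supply exactly the required rigidity.
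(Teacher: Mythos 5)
Your proof is correct, and it follows the same high-level strategy as the paper---exhibit a set $\Sigma$ together with two non-implied key sets whose simultaneous refutation is incompatible with satisfying $\Sigma$---but with a genuinely different and smaller witness. The paper takes $\Sigma=\{\{\{A\},\{B\}\},\{\{C\},\{D\}\}\}$ over four attributes and two ternary non-consequences, and argues that any joint counterexample must contain (a homomorphic image of) a canonical four-tuple relation, which then destroys one of the two constraints in $\Sigma$. You instead work over the two-attribute schema $\{A,B\}$ with the single constraint $\mathcal{X}_0=\{\{A\},\{B\}\}$ and targets $\varphi_1=\{\{A\},\{A,B\}\}$, $\varphi_2=\{\{B\},\{A,B\}\}$; the inclusion of the full key $\{A,B\}$ in each target is exactly what forces every $\varphi_i$-violating pair to carry a null marker on a specific attribute (the identical-tuples subcase you gloss over when excluding matching $A$-values is ruled out by distinctness of tuples on a two-attribute schema, so the gap is harmless). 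Splicing one tuple from each violating pair then yields a pair that neither $\{A\}$ nor $\{B\}$ separates, contradicting $\mathcal{X}_0$. Your route is more elementary and fully self-contained (a direct case analysis on two offending pairs producing a third, rather than a homomorphism argument), and it yields the slightly sharper observation that even a singleton set $\Sigma$ consisting of one unary key set already fails to admit an Armstrong relation; the paper's example, with unary key sets on both sides of $\Sigma$ and targets that are themselves antichains, illustrates the failure in a setting closer to its later discussion of the unary fragment.
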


\begin{proof}
An example is \[\Sigma=\{\{\{A\},\{B\}\},\{\{C\},\{D\}\}\}\] with attributes $A,B,C,D$. Then $\sigma_1=\{\{A,C\},\{A,D\},\{B,C\}\}$ and $\sigma_2=\{\{A,D\},\{B,C\},\{B,D\}\}$ are two non-consequences of $\Sigma$, respectively exemplified by the two 2-tuple relations on the left of Figure~\ref{figA}, where ``$d$'' refers to any distinct total value.

\begin{figure}[t]
\begin{center}
\begin{tabular}{cccc}\toprule
$A$&$B$&$C$&$D$\\\midrule
$d$&$d$&$\bot$&$d$\\
$\bot$&$d$&$d$&$d$\\\bottomrule
\end{tabular}
\qquad
\begin{tabular}{cccc}\toprule
$A$&$B$&$C$&$D$\\\midrule
$d$&$d$&$d$&$\bot$\\
$d$&$\bot$&$d$&$d$\\\bottomrule
\end{tabular}
\qquad
\begin{tabular}{cccc}\toprule
$A$&$B$&$C$&$D$\\\midrule
$d$&$d$&$\bot$&$d$\\
$\bot$&$d$&$d$&$d$\\
$d$&$d$&$d$&$\bot$\\
$d$&$\bot$&$d$&$d$\\\bottomrule
\end{tabular}
\end{center}
\caption{\label{figA}}
\vspace{-4mm}
\end{figure}

These are the only possible types of tuple pairs that satisfy $\Sigma\cup\{\neg \sigma_1\}$ and $\Sigma\cup\{\neg\sigma_2\}$, respectively.
Therefore, we observe that any relation $r$ satisfying $\Sigma$ and refuting both $\sigma_1$ and $\sigma_2$ has a homomorphism from a relation of the form on the right of Figure~\ref{figA} to a subset of $r$ with the condition that this homomorphism preserves nulls and maps domain values to domain values. However, then neither $\{\{A\},\{B\}\}$ nor $\{\{C\},\{D\}\}$ is a key set anymore.
\end{proof}

\section{Implication for Unary by Arbitrary Key Sets}\label{s:fragment}

In this section we identify a fragment of key sets for which automated reasoning is efficient. This is strongly motivated by the results of the previous sections in which the coNP-completeness of the implication problem, and the lack of general Armstrong relations has been established. Indeed, the fragment is the implication of unary key sets by arbitrary key sets. We show that this fragment is captured axiomatically by the Refinement and Upward Closure rules, can be decided in time quadratic in the input, and Armstrong relations always exist and can be computed with conservative use of time and space.

\subsection{An algorithmic characterization}

Our first result establishes that unary key sets must be implied by a single key set from the given collection of key sets.

\begin{theorem}\label{t:algorithmic}
Let $\Sigma=\{\calX_1, \ldots ,\calX_n\}$ be a collection of arbitrary key sets, and let $\varphi=\{\{A_1\},\ldots,\{A_k\}\}$ be a unary key set over relation schema $R$. Then $\Sigma$ implies $\varphi$ if and only if there is some $i\in\{1,\ldots,n\}$ such that $ \bigcup \calX_i\sub \{A_1,\ldots,A_k\}$.
\end{theorem}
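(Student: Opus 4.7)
The plan is to derive both directions directly from Lemma~\ref{lemA}, which already provides a clean combinatorial characterization of key-set implication. Since $\varphi$ is unary, every subset $\calZ\sub\varphi$ satisfies $\bigcup\calZ\sub\{A_1,\ldots,A_k\}$, and this observation drives both directions.

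For the backward direction, I would assume $\bigcup\calX_i\sub\{A_1,\ldots,A_k\}$ for some fixed $i$ and verify the criterion of Lemma~\ref{lemA}. Given any $(X_1,\ldots,X_n)\in\calX_1\times\cdots\times\calX_n$, I would choose $\calZ:=\{\{A\}\mid A\in X_i\}$. Since $X_i\in\calX_i$, we have $X_i\sub\bigcup\calX_i\sub\{A_1,\ldots,A_k\}$, so $\calZ\sub\varphi$. Moreover $\bigcup\calZ=X_i\sub\bigcup_j X_j$, and trivially $X_i\sub\bigcup\calZ$, which are exactly the two conditions of Lemma~\ref{lemA}. (Alternatively, one can derive $\varphi$ syntactically from $\calX_i$ by Refinement followed by Upward closure, making use of the completeness of~$\A$.)

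For the forward direction I would argue by contrapositive. Suppose that for every $i$ we have $\bigcup\calX_i\not\sub\{A_1,\ldots,A_k\}$. Then for each $i$, pick some $X_i\in\calX_i$ containing an attribute outside $\{A_1,\ldots,A_k\}$, so that $X_i\not\sub\{A_1,\ldots,A_k\}$. I claim the tuple $(X_1,\ldots,X_n)$ witnesses failure of the Lemma~\ref{lemA} condition for $\varphi$. Indeed, any $\calZ\sub\varphi$ consists of singletons from $\{\{A_1\},\ldots,\{A_k\}\}$, hence $\bigcup\calZ\sub\{A_1,\ldots,A_k\}$. Consequently $X_i\not\sub\bigcup\calZ$ for every $i$, because each $X_i$ contains an attribute outside $\{A_1,\ldots,A_k\}$. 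By Lemma~\ref{lemA}, $\Sigma\not\models\varphi$.

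There is no serious obstacle here: the unary shape of $\varphi$ forces $\bigcup\calZ\sub\{A_1,\ldots,A_k\}$ for every admissible $\calZ$, which collapses the ``for some $i$'' quantifier in Lemma~\ref{lemA} into a single uniform choice of $\calX_i$. The only mild care is in the contrapositive step, ensuring that an individual choice of $X_i\in\calX_i$ with $X_i\not\sub\{A_1,\ldots,A_k\}$ can be made for each $i$ independently — this is immediate from the assumption $\bigcup\calX_i\not\sub\{A_1,\ldots,A_k\}$.
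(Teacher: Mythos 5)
Your argument is correct, but it takes a different route from the paper's own proof. You obtain both directions of Theorem~\ref{t:algorithmic} as a corollary of the semantic characterization in Lemma~\ref{lemA}: for the ``if'' direction you exploit the unary shape of $\varphi$ to pick, for every choice tuple $(X_1,\ldots,X_n)$, the uniform witness $\calZ=\{\{A\}\mid A\in X_i\}$, and for the ``only if'' direction you note that every $\calZ\sub\varphi$ has $\bigcup\calZ\sub\{A_1,\ldots,A_k\}$, so choosing $X_i\in\calX_i$ with $X_i\not\sub\{A_1,\ldots,A_k\}$ defeats the criterion; both steps check out. The paper instead argues directly: the ``if'' direction is proved syntactically by Refinement and Upward closure together with soundness (the alternative you mention only in passing), and the ``only if'' direction is proved by exhibiting an explicit two-tuple relation of \emph{total} tuples agreeing exactly on $\{A_1,\ldots,A_k\}$ and disagreeing elsewhere. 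Your reduction is shorter and reuses machinery already established for the completeness proof of $\A$; the paper's direct construction is self-contained and yields the slightly stronger piece of information that non-implication in this fragment is always witnessed by a complete (null-free) two-tuple relation, a fact that feeds into the subsequent Armstrong-relation construction, which the paper notes requires no null markers. Either way the statement is fully established.
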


\begin{proof} If $\bigcup \calX_i\sub X$ for some $i\in\{1,\ldots,n\}$, Refinement and Upward Closure infer $\varphi$ from $\Sigma$. Due to the rules' soundness, $\varphi$ is implied by $\Sigma$.

Vice versa, assume that $\bigcup \calX_i\not\sub X$ holds for all $i=1,\ldots,n$. Let $r$ be defined as $r=\{t,t'\}$ where $t$ and $t'$ are two total tuples that agree on $X=\{A_1,\ldots,A_k\}$ and disagree elsewhere. It follows that $r$ violates $\varphi$. Since $\bigcup \calX_i\not\sub X$ for all $i=1,\ldots,n$, $t_1$ and $t_2$ must differ on some attribute in $\bigcup \calX_i$ for $i=1,\ldots,n$. This means, $r$ satisfies all key sets in $\Sigma$.  Consequently, $\Sigma$ does not imply $\varphi$.
\end{proof}

\noindent
A direct consequence of Theorem~\ref{t:algorithmic} is the quadratic time complexity of the implication problem for unary by arbitrary key sets. For a collection $\Sigma$ of key sets let $|\Sigma|$ denote the total number of attribute occurrences in elements of $\Sigma$.

\begin{corollary}
The implication problem of unary key sets by arbitrary key sets is decidable in time $\mathcal{O}(|\Sigma|\times|\varphi|)$ in the input $\Sigma\cup\{\varphi\}$.
\end{corollary}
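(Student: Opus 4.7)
The plan is to observe that the corollary is essentially a direct algorithmic translation of Theorem~\ref{t:algorithmic}. Given an instance $\Sigma\cup\{\varphi\}$ with $\Sigma=\{\calX_1,\ldots,\calX_n\}$ and $\varphi=\{\{A_1\},\ldots,\{A_k\}\}$, the theorem reduces the implication test to checking whether there is some $i\in\{1,\ldots,n\}$ with $\bigcup\calX_i\sub X$, where $X:=\{A_1,\ldots,A_k\}$. So the task is to design a decision procedure that performs this set-inclusion search within the claimed time bound, and then account the work carefully.

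First I would specify the algorithm. It iterates over the key sets $\calX_1,\ldots,\calX_n$ in $\Sigma$. For each $\calX_i$, it scans every attribute occurrence $B$ appearing in some key of $\calX_i$ and checks whether $B\in X$; as soon as one such $B$ fails the membership test, the algorithm discards $\calX_i$ and moves on to $\calX_{i+1}$. If every attribute of some $\calX_i$ passes the test, the algorithm returns \emph{yes}; if no such $i$ is found after exhausting $\Sigma$, it returns \emph{no}. Correctness is immediate from Theorem~\ref{t:algorithmic}, since the algorithm accepts precisely when there exists $i$ with $\bigcup\calX_i\sub X$.

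For the complexity analysis, a single membership test $B\in X$ can be implemented by a linear scan through $\{A_1,\ldots,A_k\}$ in $\mathcal{O}(|\varphi|)$ time. Since the algorithm performs this test at most once per attribute occurrence in $\Sigma$, and the total number of attribute occurrences across $\calX_1,\ldots,\calX_n$ is by definition $|\Sigma|$, the overall running time is bounded by $\mathcal{O}(|\Sigma|\times|\varphi|)$, as claimed. (One could sharpen this to $\mathcal{O}(|\Sigma|+|\varphi|)$ by preprocessing $X$ into a hash set or a characteristic bitmap of the attribute universe, but the stated bound already follows without such tricks.)

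There is no real obstacle in this proof: the hard work has been done in Theorem~\ref{t:algorithmic}, which collapses a potentially combinatorial search over subsets of $\Sigma$ into a linear scan through its constituent key sets. The only points that warrant care are (i) using the paper's conventions for $|\Sigma|$ (total attribute occurrences) and $|\varphi|$, and (ii) noting that the early-termination structure of the loop — discarding a $\calX_i$ at the first offending attribute — is exactly what ensures each attribute occurrence of $\Sigma$ contributes at most one $\mathcal{O}(|\varphi|)$ test to the total cost.
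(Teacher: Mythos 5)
Your proof is correct and follows exactly the route the paper intends: the corollary is stated there as a direct consequence of Theorem~\ref{t:algorithmic}, and your explicit scan over the attribute occurrences of each $\calX_i$ with an $\mathcal{O}(|\varphi|)$ membership test per occurrence is precisely the straightforward implementation yielding the $\mathcal{O}(|\Sigma|\times|\varphi|)$ bound. No gaps; your remark that hashing would even give a near-linear bound is a valid (optional) sharpening beyond what the paper claims.
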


\subsection{A finite axiomatization}

Our next result establishes a finite axiomatization for the implication of unary by arbitrary key sets that consists of the Refinement and Upward Closure rules. As this fragment is decidable in time quadratic in the input, and the general case is \textit{coNP}-complete, the Composition rule is the source of likely intractability.

\begin{corollary}
The implication problem of unary key sets by arbitrary key sets has a sound and complete axiomatization in Refinement and Upward Closure.
\end{corollary}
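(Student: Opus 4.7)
The plan is to deduce the corollary directly from Theorem~\ref{t:algorithmic}, which characterises implication of a unary key set by requiring that the union of some single premise $\calX_i$ be contained in the conclusion. Soundness of the rule set $\{$Refinement, Upward Closure$\}$ is immediate, since both rules are part of the sound axiomatisation $\A$ established in Section~\ref{s:axioms}; no further argument is needed on that side.

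For completeness, suppose $\Sigma=\{\calX_1,\ldots,\calX_n\}$ implies $\varphi=\{\{A_1\},\ldots,\{A_k\}\}$. Theorem~\ref{t:algorithmic} supplies an index $i$ with $\bigcup \calX_i \subseteq \{A_1,\ldots,A_k\}$. I would then exhibit a derivation of $\varphi$ from $\calX_i$ in two stages. First, iteratively apply Refinement to each non-singleton key $X\in \calX_i$: write $X=YZ$ with $Y$, $Z$ nonempty, replacing $X$ by $Y$ and $Z$, and repeating on any part that still has size at least two. Since each application strictly decreases the maximum key size within the current key set, after finitely many applications one reaches the unary key set $\calY=\{\{A\}\mid A\in \bigcup \calX_i\}$. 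Second, because $\bigcup \calX_i \subseteq \{A_1,\ldots,A_k\}$, we have $\calY\subseteq \varphi$, so a single application of Upward Closure adding $\varphi\setminus \calY$ yields $\varphi$.

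The one point that needs small care is that Refinement as written in Table~\ref{tab} discards the compound key $XY$ while introducing $X$ and $Y$. This is harmless for our purposes: we never need to retain larger keys, since the target conclusion is itself unary, and at the end we only need a subset of $\varphi$ before closing off with Upward Closure. Termination of the splitting procedure is also routine, being bounded by $\sum_{X\in \calX_i}(|X|-1)$ applications of Refinement.

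No genuine obstacle arises. The corollary is essentially a corollary of Theorem~\ref{t:algorithmic} together with the elementary observation that Refinement is expressive enough to atomise any key and that Upward Closure can freely enlarge the unary outcome. In particular, the Composition rule of $\A$ is never invoked, which reinforces the remark made in the text that Composition is the source of the likely intractability of the full implication problem.
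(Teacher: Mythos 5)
Your proof is correct and follows essentially the same route as the paper: both deduce the corollary from Theorem~\ref{t:algorithmic}, using that $\bigcup\calX_i\subseteq\{A_1,\ldots,A_k\}$ for some $i$ yields a derivation of $\varphi$ by Refinement followed by Upward Closure. The only differences are cosmetic: you argue completeness directly and spell out the atomizing Refinement steps, whereas the paper argues the contrapositive and leaves that derivation implicit in the proof of Theorem~\ref{t:algorithmic}.
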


\begin{proof} Let $\Sigma=\{\calX_1, \ldots ,\calX_n\}$ be a set of key sets, and let $\varphi=\{\{A_1\},\ldots,\{A_k\}\}$ be a unary key set over relation schema $R$. If $\varphi$ can be inferred from $\Sigma$ by a sequence of applications of the Refinement and Upward Closure rules, the soundness of these rules ensures that $\varphi$ is also implied by $\Sigma$.

For completeness we assume that $\varphi$ cannot be inferred from $\Sigma$ by means of applications using the Refinement and Upward Closure rules. Hence, $\bigcup \calX_i\not\sub X$ holds for all $i=1,\ldots,n$. Theorem~\ref{t:algorithmic} shows that $\Sigma$ does not imply $\varphi$.
\end{proof}

\subsection{Existence and computation of Armstrong relations}

Armstrong models relative to unary consequences are also easy to obtain. It merely suffices to take a disjoint union of all of the two tuple relations mentioned in the proof of Theorem~\ref{t:algorithmic}.

\begin{corollary}
The implication problem of unary key sets by arbitrary key sets has Armstrong relations.
\end{corollary}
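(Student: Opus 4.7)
The plan is to construct an Armstrong relation by assembling, in a single relation, all of the two-tuple counterexamples produced in the proof of Theorem~\ref{t:algorithmic}. Concretely, for every subset $X\sub R$ such that $\bigcup\calX_i\not\sub X$ for all $i=1,\ldots,n$ (i.e., every $X$ for which the unary key set $\varphi_X=\{\{A\}\mid A\in X\}$ is \emph{not} implied by $\Sigma$), I would take the two-tuple relation $r_X=\{t_X,t_X'\}$ of totally-defined tuples that agree exactly on $X$ and disagree on $R\setminus X$, exactly as in the proof of Theorem~\ref{t:algorithmic}. I then form the candidate Armstrong relation $r=\bigcup_X r_X$, where across distinct $X$ the eight-or-fewer fresh domain values used in the construction are chosen to be pairwise disjoint, so that no accidental agreement occurs between tuples coming from different $r_X$'s.

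Verification then splits into two checks. First, that $r\models\Sigma$: take any two distinct tuples in $r$. If both lie in the same $r_X$, the argument in the proof of Theorem~\ref{t:algorithmic} applies verbatim — since $\bigcup\calX_i\not\sub X$, there is some attribute in $\bigcup\calX_i\setminus X$ on which the two tuples differ, and the key of $\calX_i$ containing it witnesses $\calX_i$. If the two tuples lie in different blocks $r_X$ and $r_{X'}$, then by disjointness of value sets they are total and disagree on every attribute, so any member of any $\calX_i$ witnesses $\calX_i$. Hence $r$ satisfies every $\calX_i\in\Sigma$.

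Second, that $r$ fails exactly those unary key sets that $\Sigma$ does not imply. For a non-implied $\varphi_X$, the pair $t_X,t_X'\in r_X\sub r$ agrees on $X$, so for every $A\in X$ the two tuples match on $\{A\}$; hence $r\not\models\varphi_X$. Conversely, for any unary key set $\varphi$ that \emph{is} implied by $\Sigma$, the fact that $r\models\Sigma$ together with soundness of implication immediately gives $r\models\varphi$. Combined, these show that $r$ is Armstrong for $\Sigma$ with respect to unary key sets.

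The only real subtlety — and the only place where one must be careful — is the cross-block analysis: one must guarantee that fresh values in distinct $r_X$'s do not collide on any attribute, otherwise a pair of tuples from different blocks could accidentally agree somewhere and invalidate the satisfaction of some $\calX_i$, or more importantly, match on a coordinate that one of the non-implied $\varphi_X$ uses. Choosing, e.g., value sets indexed by the pair $(X,\text{attribute})$ makes this routine. Since there are only $2^{|R|}$ choices of $X$, the resulting $r$ is finite, and its size is linear in the number of non-implied unary key sets, which shows the construction is conservative in time and space as claimed.
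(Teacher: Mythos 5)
Your construction is correct and is essentially the paper's own proof: the paper establishes this corollary by taking a disjoint union of the two-tuple relations from the proof of Theorem~\ref{t:algorithmic}, which is exactly what you do, with the verification (within-block and cross-block pairs, plus soundness for implied unary key sets) spelled out. The only overreach is your closing efficiency remark --- the number of non-implied unary key sets can be exponential in $|R|$, and the paper itself notes the disjoint-union construction is not efficient, deferring size-conservative generation to the later anti-key/hypergraph-transversal construction --- but this does not affect the existence statement being proven here.
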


While the existence of perfect models is easy to come by the disjoint union construction, an actual generation of Armstrong relations by this construction is not efficient. Smaller Armstrong relations can be constructed as follows. Theorem~\ref{t:algorithmic} shows that the implication problem of unary key sets $\mathcal{X}$ by a collection $\Sigma=\{\calX_1, \ldots ,\calX_n\}$ of arbitrary key sets only depends on the attributes contained in each given key set of $\Sigma$, and not on how they are grouped as sets in a key set. We thus identify, without loss of generality, $\mathcal{X}$ with $\bigcup\mathcal{X}$ and each $\mathcal{X}_i$ with $\bigcup\mathcal{X}_i$.

The idea is then to compute so-called anti-keys, which are the maximal subsets of the underlying relation schema which are key sets not implied by $\Sigma$. Given the anti-keys, an Armstrong relation for $\Sigma$ can be generated by starting with a single complete tuple, and introducing for each anti-key a new tuple that has matching total values on the attributes of the anti-key and unique values on attributes outside the anti-key. This construction ensures that all non-implied (unary) key sets are violated and all given key sets are satisfied. The computation of the anti-keys from $\Sigma$ can be done by taking the complements of the minimum transversals of the hypergraph formed by the elements of $\Sigma$. A transversal for a given set of attribute subsets $\mathcal{X}_i$ is an attribute subset $\mathcal{T}$ such that $\mathcal{T}\cap\mathcal{X}_i\not=\emptyset$ holds for all $i$. While many efficient algorithms exist for the computation of all hypergraph transversals, it is still an open problem whether there is an algorithm that is polynomial in the output \cite{DBLP:journals/siamcomp/EiterGM03}. We can show that this construction always generates an Armstrong relation whose number of tuples is at most quadratic in that of an Armstrong relation that requires a minimum number of tuples.

\begin{corollary}
Armstrong relations that are at most quadratic in that of a minimum Armstrong relation can be generated for unary by arbitrary key sets.
\end{corollary}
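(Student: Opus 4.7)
The plan is to formalize the construction sketched in the corollary statement and then prove a matching lower bound on the size of any Armstrong relation via a pair-counting argument. Write $a$ for the number of anti-keys, i.e., the maximal sets $X\sub R$ with $\bigcup\calX_i\not\sub X$ for every $i$. The construction produces a relation $r$ with $a+1$ total tuples: a base tuple $t_0$, and for each anti-key $X$ a tuple $t_X$ that agrees with $t_0$ on $X$ and uses pairwise fresh values on $R\setminus X$.

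First I would verify that $r$ is an Armstrong relation for the unary consequences of $\Sigma$. For satisfaction of $\Sigma$, the agreement set of $(t_0,t_X)$ is exactly $X$, and that of $(t_X,t_{X'})$ is exactly $X\cap X'$; since both $X$ and $X\cap X'$ are contained in an anti-key, no $\bigcup\calX_i$ lies inside them, so for each $i$ some $A\in\bigcup\calX_i$ lies in the disagreement set and belongs to some $Y\in\calX_i$ on which both tuples are total and differ. For the Armstrong property, Theorem~\ref{t:algorithmic} gives that a unary key set $\varphi$ with underlying attribute set $Y$ is not implied by $\Sigma$ precisely when $Y$ is contained in some anti-key $X$, and the pair $(t_0,t_X)$ then agrees on $Y$ and witnesses the violation of $\varphi$ in $r$.

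For the lower bound, let $r^*$ be any Armstrong relation with $m$ tuples and, for distinct $t,t'\in r^*$, set
\[ S(t,t'):=\{A\in R : t(A)=\bot \text{ or } t'(A)=\bot \text{ or } t(A)=t'(A)\}. \]
Since $r^*\models\Sigma$, for every $\calX_i$ the set $\bigcup\calX_i$ meets the complement of $S(t,t')$, so $S(t,t')$ itself has the defining anti-key property. Each anti-key $X$ yields a non-implied unary key set that $r^*$ must violate, so some pair $(t,t')$ satisfies $X\sub S(t,t')$; the maximality of $X$ among sets with the anti-key property then forces $X=S(t,t')$. Consequently the map sending each anti-key to a witnessing pair is injective, whence $a\le\binom{m}{2}$.

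Combining the two halves gives $a+1\le\binom{m}{2}+1\le m^2$, so the constructed relation is at most quadratic in the size of any minimum Armstrong relation. I expect the main obstacle to be the maximality argument underpinning injectivity: one must verify that if an anti-key $X$ is contained in $S(t,t')$, and $S(t,t')$ is in turn contained in some (possibly larger) set $X_0$ with the anti-key property, then the chain $X\sub S(t,t')\sub X_0$ combined with the maximality of $X$ forces equality throughout, which is exactly what prevents two distinct anti-keys from being served by the same witnessing pair.
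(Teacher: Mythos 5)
Your proof is correct and takes essentially the same route as the paper's sketch: the $(a+1)$-tuple construction (one base tuple plus, for each anti-key, a tuple agreeing with it exactly on that anti-key) gives the upper bound, while your pair-counting/maximality argument yielding $a\le\binom{m}{2}$ is exactly the source of the paper's stated lower bound of about $\tfrac{1}{2}\sqrt{1+8a}$ on the size of a minimum Armstrong relation, from which the quadratic comparison follows as you conclude.
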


\begin{proof}[Sketch]
One can show first that a given relation is Armstrong for a given set of key sets if and only if for every anti-key the relation has two tuples which have matching values on exactly those attributes that form the anti-key and for no union over the elements of a key set there is a pair of tuples with matching values on all attributes in the union. Subsequently, one can show that the number of tuples in a minimum-sized Armstrong relation is bounded from below by one half of the square root of 1 plus 8 times the number of anti-keys, and bounded upwards by the increment of the number of anti-keys. Consequently, our construction generates an Armstrong relation that is at most quadratic in a minimum-sized Armstrong relation.
\end{proof}

Our construction can also be viewed as a construction of Armstrong relations for certain keys by key sets. Note that \cite{DBLP:journals/vldb/KohlerLLZ16} constructed Armstrong relations for sets of possible and certain keys under \texttt{NOT NULL} constraints, whenever they exist. Our construction here does not require null markers.

\begin{example} Consider the set $\Sigma=\{\mathcal{X}_1,\mathcal{X}_2\}$ with $\mathcal{X}_1$ and $\mathcal{X}_2$ from Example~\ref{ex:intro-2} over the relation schema \textsc{Ward}. Then $\bigcup\mathcal{X}_1=\{\textit{room},\textit{time},\textit{injury}\}$ and $\bigcup\mathcal{X}_2=\{\textit{name},\textit{time},\textit{injury}\}$. The minimum transversals would be $\mathcal{T}_1=\{\textit{time}\}$, $\mathcal{T}_2=\{\textit{injury}\}$, and $\mathcal{T}_3=\{\textit{room},\textit{name}\}$, and their complements on \textsc{Ward} are the anti-keys \[\mathcal{A}_1=\{\textit{room},\textit{name},\textit{address},\textit{injury}\},\] \[\mathcal{A}_2=\{\textit{room},\textit{name},\textit{address},\textit{time}\}, \text{and}\] \[\mathcal{A}_3=\{\textit{address},\textit{injury},\textit{time}\}.\] The following relation is Armstrong for $\Sigma$.
\begin{center}
\begin{tabular}{c@{\hspace*{.5cm}}c@{\hspace*{.5cm}}c@{\hspace*{.5cm}}c@{\hspace*{.5cm}}c}\hline
\textit{room} & \textit{name} & \textit{address} & \textit{injury} & \textit{time} \\ \hline
1             & Miller        & 24 Queen St      & leg fracture    & Sunday, 16 \\
1             & Miller        & 24 Queen St      & leg fracture    & Monday, 19 \\
1             & Miller        & 24 Queen St      & arm fracture    & Monday, 19 \\
2             & Maier         & 24 Queen St      & arm fracture    & Monday, 19 \\ \hline
\end{tabular}
\end{center}
The relation satisfies $\mathcal{X}_1$ and $\mathcal{X}_2$, but the relation violates the unary key set $\varphi'=\{\{\textit{room}\},\{\textit{name}\}, \{\textit{address}\},\{\textit{time}\}\}$, so $\varphi'$ is not implied by $\Sigma$.
\end{example}

\section{Conclusion and Future Work}\label{s:conclusion}

We took first steps in investigating limits and opportunities for automated reasoning about key sets in databases. Key sets provide a more general and flexible implementation of entity integrity than Codd's notion of a primary key. We established a linear-time algorithm for the validation of a given key set on a given data set, and experimentally demonstrated its runtime behavior. We showed that the implication problem for general key sets enjoys a binary axiomatization, is \emph{coNP}-complete, and lacks Armstrong relations. The implication problem of unary key sets by arbitrary key sets enjoys a unary axiomatization, is decidable in quadratic input time, and Armstrong relations can always be generated using hypergraph transversals such that the number of tuples is guaranteed to be at most quadratic in the minimum number of tuples required. Our results provide a foundation for controlling entity integrity in databases with missing values.

\noindent
Interesting questions arise in theory and practice. Our \textit{coNP}-completeness result calls for fixed-parameter solutions. A characterization for the existence of Armstrong relations in the general case would be interesting, and their efficient construction whenever possible. The validation of key sets in databases is an important practical issue, for which effective index structures need to be found. The problem of computing all key sets that hold in a given relation is important for data profiling \cite{DBLP:journals/vldb/AbedjanGN15}. Automated reasoning about foreign key sets is interesting as they generalize referential integrity \cite{DBLP:journals/ita/LeveneL01}. Similar to how functional and inclusion dependencies and independence atoms interact \cite{DBLP:journals/jcss/CasanovaFP84,DBLP:journals/jcss/KohlerL17}, automated reasoning for functional, multivalued, and inclusion dependency sets is interesting \cite{DBLP:journals/jcss/HannulaKL16}.

\bibliographystyle{plain}
\bibliography{biblio}
\end{document}